\newcommand\reddashuline{\bgroup\markoverwith
{\textcolor{red}{
\hdashrule[-0.5ex]{1pt}{0.8pt}{0.5pt}}}\ULon}
\newcommand\blueuline{\bgroup\markoverwith
{\textcolor{blue}{\rule[-0.5ex]{2pt}{0.8pt}}}\ULon}
\renewcommand{\ALG@name}{Protocol}
\renewcommand*\env@matrix[1][\arraystretch]{%
  \edef\arraystretch{#1}%
  \hskip -\arraycolsep
  \let\@ifnextchar\new@ifnextchar
  \array{*\c@MaxMatrixCols c}}
\newcommand{\multiline}[1]{%
  \begin{tabularx}{\dimexpr\linewidth-\ALG@thistlm}[t]{@{}X@{}}
    #1
  \end{tabularx}
}
\newcommand{\SM}{\hspace{-0.15mm}\Set M}
\newcommand{\tSM}{\hspace{-0.15mm}\tilde{\Set M}}
\renewcommand*\env@matrix[1][\arraystretch]{%
  \edef\arraystretch{#1}%
 \hskip -\arraycolsep
  \let\@ifnextchar\new@ifnextchar
  \array{*\c@MaxMatrixCols c}}
    \def\CT@@do@color{%
      \global\let\CT@do@color\relax
            \@tempdima\wd\z@
            \advance\@tempdima\@tempdimb
            \advance\@tempdima\@tempdimc
    \advance\@tempdimb\tabcolsep
    \advance\@tempdimc\tabcolsep
    \advance\@tempdima2\tabcolsep
            \kern-\@tempdimb
            \leaders\vrule
                    \hskip\@tempdima\@plus  1fill
            \kern-\@tempdimc
            \hskip-\wd\z@ \@plus -1fill }
\def\hlinew#1{%
  \noalign{\ifnum0=`}\fi\hrule \@height #1 \futurelet
   \reserved@a\@xhline}
\newcolumntype{"}{@{\hskip\tabcolsep\vrule width 0.8pt\hskip\tabcolsep}}   
\newlength{\Oldarrayrulewidth}
\newcolumntype{L}[1]{>{\raggedright\let\newline\\\arraybackslash\hspace{0pt}}m{#1}}
\newcolumntype{C}[1]{>{\centering\let\newline\\\arraybackslash\hspace{0pt}}m{#1}}
\newcolumntype{R}[1]{>{\raggedleft\let\newline\\\arraybackslash\hspace{0pt}}m{#1}}
\definecolor{dblue}{rgb}{0,0,0.75}
\definecolor{dgreen}{rgb}{0,0.7,0}
\definecolor{lightblue}{rgb}{0.83,0.91,1.0}
\definecolor{red1}{rgb}{1, 0.3, 0.4}
\definecolor{red2}{rgb}{0.98, 0.02, 0.09}
\definecolor{red3}{rgb}{0.9, 0, 0}
\definecolor{c_l2}{rgb}{0.2, 0.42, 0.95}
\definecolor{c_l4}{rgb}{0.04, 0.7, 0.7}
\definecolor{c_l6}{rgb}{0.65, 0.65, 0.10}
\definecolor{c_lopt}{rgb}{1, 0.2, 0.2}
\definecolor{c_m2}{rgb}{0.242, 0.150, 0.660}
\definecolor{c_m4}{rgb}{0.268, 0.345, 0.954}
\definecolor{c_m6}{rgb}{0.169, 0.570, 0.936}
\definecolor{c_m8}{rgb}{0.043, 0.741, 0.739}
\definecolor{c_m10}{rgb}{0.509, 0.787, 0.350}
\definecolor{c_m12}{rgb}{0.9, 0.7, 0.2}
\definecolor{c_m14}{rgb}{1,        0.302, 0.302}
\newcommand*{\QEDB}{\hfill\ensuremath{\square}}%
\newcommand{\Ps}{\mathscr{P}}
\declaretheoremstyle[
spaceabove=4pt, spacebelow=4pt,
headfont=\normalfont\bfseries,
notefont=\normalfont, notebraces={(}{)},
headpunct={:},
bodyfont=\normalfont,
postheadspace=0.5em,
]{mynote}
\declaretheoremstyle[
spaceabove=4pt, spacebelow=4pt,
headfont=\normalfont,
notefont=\normalfont, notebraces={(}{)},
headpunct={:},
bodyfont=\normalfont,
postheadspace=0.5em,
]{mynote2}
\declaretheorem[style=mynote]{Remark}
\declaretheorem[style=mynote2,name=$\Ps$-\hspace{-1.2mm}]{Problem}
\declaretheorem[style=mynote2,name=$\Ps$-\hspace{-1.2mm}]{ProblemA}
\declaretheorem[style=mynote,name=Theorem,numberwithin=section]{ThmA}
\declaretheorem[style=mynote,name=Lemma,numberwithin=section]{LemA}
\declaretheorem[style=mynote,name=Definition,numberwithin=section]{Def}
\newcommand{\ket}[1]{\left| #1 \right>} 
\newcommand{\bra}[1]{\left< #1 \right|} 
\newcommand{\paperTitle}{Minimization of the estimation error for entanglement distribution networks\\ with arbitrary noise}
\begin{document}
\normalem

{\color{white}
\fontsize{0pt}{0pt}\selectfont
\begin{acronym}
\acro{QED}{quantum entanglement distillation}\vspace{-5.5mm}
\acro{w.r.t.}{with respect to}\vspace{-5.5mm}
\acro{i.i.d.}{independent and identically distributed}\vspace{-5.5mm}
\acro{pmf}{probability mass function}\vspace{-5.5mm}
\acro{pdf}{probability density function}\vspace{-5.5mm}
\acro{cdf}{cumulative distribution function}\vspace{-5.5mm}
\acro{iff}{if and only if}\vspace{-5.5mm}
\acro{POVM}{positive operator-valued measure}\vspace{-5.5mm}
\acro{QBER}{quantum bit error rate}\vspace{-5.5mm}
\end{acronym}}


\setcounter{page}{1}
\title{\paperTitle}

\author{Liangzhong Ruan}
\affiliation{
School of Cyber Science and Engineering, Xi'an Jiaotong University, China\\
Qutech, Delft University of Technology, The Netherlands
}



\begin{abstract}
Fidelity estimation is essential for the quality control of entanglement distribution networks. 
Because measurements collapse quantum states, 
we consider a setup in which nodes randomly sample a subset of the entangled qubit pairs to measure 
and then estimate the average fidelity of the unsampled pairs conditioned on the measurement outcome. 
The proposed estimation protocol achieves the lowest mean squared estimation error in a difficult scenario with arbitrary noise and no prior information.
Moreover, this protocol is implementation friendly because it only performs local Pauli operators according to a predefined sequence.  
Numerical studies show that compared to existing fidelity estimation protocols,
 the proposed protocol reduces the estimation error in both scenarios with i.i.d. noise and correlated noise.
\end{abstract}


\maketitle

\acresetall             

\section{Introduction}

Entanglement distribution networks \cite{WehElkHan:J18} are an important developmental stage on the way to a full-blown quantum Internet \cite{Kim:J08,Cas:J18,Panetal:J19}. 
Such networks enable device-independent protocols \cite{ArnDupFawRenVid:J18,AveMarValVil:J18,LeeHob:J18,Sch-etal:J21,Lietal:J21}, thereby achieving the highest level of quantum security \cite{Pir-etal:J20}.
Entanglement distribution networks do not require nodes to have local quantum memory.
Since efficient communication-compatible quantum memories are still under development \cite{KorMorLanNeuRitRem:J18}, entanglement distribution networks serve as a cornerstone for realizing trustworthy quantum applications with state-of-the-art quantum technology \cite{YinCaoPanetal:J17,Humetal:J18,Pom-etal:J21,LiuTianGuetal:J20}.

Entanglement quality assessment is a key building block for entanglement distribution networks.
To this end, fidelity estimation for entangled states is a promising candidate.
Fidelity is a metric that indicates the quality of quantum states \cite{WesLidFonGyu:J10,Basetal:J11,ArrLazHelBal:J14,ZhaLuoWen-etal:J21} 
and can be estimated with separable quantum measurements and classical post-processing.
Several fidelity estimation protocols have been proposed \cite{SomChiBer:J06,GuhLuGaoPan:J07,FlaLiu:J11,ZhuHay:J19}, and fidelity estimation protocols of entangled states have been implemented in several recent experiments \cite{Kaletal:J17,YinCaoPanetal:J17,Humetal:J18,Pom-etal:J21}.
However, there are still two challenges that need to be addressed.
\begin{itemize}[leftmargin=*]
\item{\em Excessive estimation error due to arbitrary noise:} 
Quantum networks often face heterogeneous and correlated noise.
When distributing quantum keys and estimating channel capacities, 
this noise leads to excessive estimation errors \cite{TomLimGisRen:J12,PfiRolManTomWeh:J18}.
Therefore, designing low-error fidelity estimation protocols in the presence of arbitrary noise is an interesting area of research.
\item{\em Efficiency loss due to separable operations:} 
In the absence of quantum memory, nodes in entanglement distribution networks perform operations that are separable between all qubits. 
Such operations result in significantly lower estimation efficiency \cite{BagBalGilRom:J06} compared to joint operations.
Minimizing the loss of efficiency due to separable operations is a major challenge for fidelity estimation.
\end{itemize}

This work focuses on fidelity estimation for entanglement distribution networks.
Since measurements collapse quantum states, we consider a network in which nodes randomly sample a subset of qubit pairs to measure
and estimate the average fidelity of unsampled pairs conditioned on the measurement outcome.

We prove that the protocol proposed in this paper achieves the lowest estimation error in the difficult scenario with arbitrary noise and no prior information, thereby overcoming the challenges listed above.
Moreover, the proposed  protocol is implementation friendly. 
This protocol uses only local Pauli measurements, standard operations that can be implemented on a variety of quantum platforms, \cite{DeGetal:J13,YinCaoPanetal:J17,Kaletal:J17,Bocetal:J18}, 
and determines the basis of each Pauli measurement according to a predefined sequence, so that no adaptive operation is required.

The remainder of the paper is organized as follows.
Section~\ref{sec:formulation} formulates the problem of minimizing the estimation error of fidelity in a scenario with arbitrary noise and no prior information. 
Section~\ref{sec:transform} and~\ref{sec:construction} solve the formulated problem and present a protocol that minimizes the estimation error of fidelity.
Section~\ref{sec:sim} evaluates the proposed protocol by comparing it with exsiting ones.
Finally, Section~\ref{sec:sum} presents the conclusion.

\section{System Setup and Problem Formulation}

\label{sec:formulation}

Consider a scenario in which two nodes share $N$ noisy qubit pairs and have no prior information of the noise.
The nodes want to estimate the fidelity of the qubit pairs \ac{w.r.t.} the target maximally entangled state.
Since maximally entangled states are mutually convertible via local operations, we set the target state to $|\Psi^-\rangle$ without loss of generality, where
$
|\Psi^\pm \rangle =\frac{|01\rangle \pm |10\rangle}{\sqrt{2}}$, 
$|\Phi^\pm \rangle =\frac{|00\rangle \pm |11\rangle}{\sqrt{2}}$
are the four Bell states. 
Denote the set of all qubit pairs by $\Set N$. The nodes randomly sample $M(< N)$ number of qubit pairs to measure.
The set of sampled pairs, $\Set{M}$, is drawn from all $M$-subsets of $\Set N$ with equal probability.

Entanglement distribution networks do not require nodes to have local quantum memory \cite{WehElkHan:J18}.
Consequently, the nodes execute operators that are separable between all qubits.  
Denote the separable operators by 
\begin{align}
\M{M}_{r}= \sum_{k}\otimes^{n\in \Set M} (\M{M}^{\mathrm{(A)}}_{r,n,k}\otimes \M{M}^{\mathrm{(B)}}_{r,n,k}),\label{eqn:MXsep}
\end{align} 
where the node index $\mathrm{X}\in\{\mathrm{A}, \mathrm{B}\}$, the qubit pair index $n\in \Set M$, and the operator index $r\in \Set R$, with 
$\Set R$ denoting the set of all possible measurement outcomes.
As \ac{POVM} operators, 
\begin{align}
\M{M}^{\mathrm{(X)}}_{r,n,k}\succcurlyeq \M{0}, \; \sum_{r}\M{M}_{r} =\mathbb{I}_{4^M}, 
\label{eqn:MPOVM}
\end{align}
where $\succcurlyeq$ denotes the matrix inequality.
The measurement operation performed on all qubit pairs can be defined as
\begin{align}
\Set{O}= \{\M{M}_{r},\; r\in \Set R \},\label{eqn:measurement}
\end{align}

Denote $\V{\rho}_{\mathrm{all}}$ as the joint state of all qubit pairs before the measurements, and
denote the $\V{\rho}_{\mathrm{all}}^{(r)}$ as the joint state of the qubit pairs conditioned on the measurement outcome $r$.
In this case, the state of the $n$-th pair is
\begin{align}
\V{\rho}^{(r)}_n = \mathrm{Tr}_{i\in{\Set N}\backslash \{n\}} \V{\rho}_{\mathrm{all}}^{(r)}.
\end{align}
Because measurements collapse quantum states, the nodes estimate the average fidelity of unsampled pairs conditioned on the measurement outcome $r$, i.e.,
\begin{align}
\bar{f}= \frac{1}{N-M}\sum_{n\in \Set N \backslash \SM }\langle\Psi^-|\V{\rho}^{(r)}_n |\Psi^-\rangle.\label{eqn:Fbar}
\end{align}
The imperfection of the measurements is considered as part of the noise.
Based on the measurement outcome, the estimator $\Set D$ is used to estimate the average fidelity of the unsampled qubit pairs $\bar{f}$, i.e.,
\begin{align}
\check{f} = {\Set D}(r).\label{eqn:estimator}
\end{align}

The nodes target at minimizing the mean squared error of the estimated fidelity.
Because the set of sampled qubit pairs $\Set M$ is drawn from all $M$-subsets of $\Set N$ with equal probability, each 
$M$-subset is selected with probability ${N\choose M}^{-1}$.
Therefore, the mean squared estimation error is given by 
\begin{align}{N\choose M}^{-1} \sum_{\Set M}\mathbb{E}_{R}\big[(\check{F}-\bar{F})^2\big].
\end{align}
Since the factor ${N\choose M}^{-1}$ does not affect the optimization of the measurement protocol $\{\Set O, \Set D\}$, it is omitted in the following problem formulation for clarity.

To ensure the robustness of the fidelity estimation protocol, we consider the scenario with arbitrary noise and no prior information.
Since there is arbitrary noise, the state of all qubit pairs $\V{\rho}_{\mathrm{all}}\in{\Set S}_{\mathrm{arb}}$, where ${\Set S}_{\mathrm{arb}}$ is the set of all $N$ qubit-pair states.
Since there is no prior information, the state $\V{\rho}_{\mathrm{all}}$ that leads to the largest error must be considered.
This optimization problem is formulated as follows.
\begin{Problem} \label{prob:1-m} 
\label{prob:1}
\end{Problem}
\vspace*{-8mm}
\begin{subequations}
\begin{align}
 \underset{\Set O, \Set D}{\mathrm{minimize}}\;\;&\max_{\V{\rho}_{\mathrm{all}}\in{\Set S}_{\mathrm{arb}}}
{N \choose M}^{\!-1\!}\sum_{\Set M}\mathbb{E}_{R}\big[(\check{F}-\bar{F})^2\big]\\
\text{subject to}\;\;
&\sum_{\SM}\mathbb{E}_{R}\big[\check{F}-\bar{F}\big]=0 ,\;\; \forall \V{\rho}_{\mathrm{all}}\in{{\Set S}_{\mathrm{arb}}}
\label{eqn:unbiased_0-m} 
\end{align}
\end{subequations}
where $\check{F}$, $\bar{F}$, and $R$ are the random variable form of the estimated fidelity $\check{f}$, the average fidelity $\bar{f}$, and the measurement outcome $r$, respectively, \eqref{eqn:unbiased_0-m} is the unbiased constraint of the estimate.

The following two sections elaborates the key procedures and ideas for solving $\Ps$-1, which
involves two main steps, namely problem transformation and operation construction.
Readers may refer to Appendices~\ref{sec:simplify} and~\ref{sec:construct} for the formal propositions and detailed derivations of these two steps.

\section{Problem Transformation}
\label{sec:transform}
This step shows that {$\Ps$-\ref{prob:1-m}} can be transformed into an equivalent problem with independent noise.
It consists of three substeps.
\begin{itemize}[leftmargin=0 mm]
\item {\bf A1.} This substep simplifies $\Ps$-\ref{prob:1-m} to an equivalent one with classical correlated noise, that is,
\begin{Problem}\label{prob:sp-m}
Special case of $\Ps$-\ref{prob:1-m}, with ${\Set S}_{\mathrm{arb}}$ replaced by ${\Set S}_{\mathrm{sp}}$, i.e., the set of $\V{\rho}_{\mathrm{all}}$ that are separable among all qubit pairs.
\end{Problem}

The key concept is to construct an operation $\Set T$ that removes the entanglement among different qubit pairs without changing the fidelity. 
This operation transforms the states in ${\Set S}_{\mathrm{arb}}$ to those in ${\Set S}_{\mathrm{sp}}$. 
Then, the transformed state can be estimated by the optimal solution of $\Ps$-\ref{prob:sp-m},
ensuring that the minimum estimation error for states in ${\Set S}_{\mathrm{arb}}$ is no higher than for states in ${\Set S}_{\mathrm{sp}}$.

Specifically,  the probabilistic rotation $\Set T$ acts independently on each qubit pair.
This operation rotates both qubits first along the $x$-axis by $180^\circ$ with probability 0.5 and
then along the $y$-axis by $180^{\circ}$ with probability 0.5.
For Bell states $|\phi\rangle, |\psi\rangle \in \{|\Phi^\pm\rangle$, $|\Psi^\pm\rangle\}$,
\begin{subequations}
  \label{eqn:R-phi-m}
    \begin{empheq}[left={\Set{T}(|\phi\rangle\langle \psi | )=\empheqlbrace\,}]{align}
      &|\phi\rangle\langle \psi | &\hspace{-13mm}\mbox{if } |\phi\rangle = |\psi\rangle
        \label{eqn:R-phi-m-1} \\
      &\M{0} &\hspace{-13mm} \mbox{if } |\phi\rangle \neq |\psi\rangle
        \label{eqn:R-phi-m-2}
    \end{empheq}
\end{subequations}
\eqref{eqn:R-phi-m-1} ensures that $\Set{T}$ does not change the fidelity of the qubit pairs.
\eqref{eqn:R-phi-m-2} shows that $\Set{T}$ removes the off-diagonal terms of a density matrix expressed in Bell basis.
Such a diagonal density matrix corresponds to states separable among all qubit pairs.

\item {\bf A2.} This substep shows that to solve $\Ps$-\ref{prob:sp-m}, it is sufficient to consider the special case with independent noise, i.e.,
\begin{Problem}\label{prob:id-m}
Special case of $\Ps$-\ref{prob:sp-m}, with ${\Set S}_{\mathrm{sp}}$ replaced by ${\Set S}_{\mathrm{id}}$, i.e., the set of $\V{\rho}_{\mathrm{all}}$ that are product states.
\end{Problem}

Denote $\check{f}(r)$ as the estimated fidelity given the measurement outcome $r$, 
and denote $\bar{f}(\V{\rho}_{\mathrm{all}},\Set{M}, r)$ as the average fidelity given the state $\V{\rho}_{\mathrm{all}}$, the sample set $\SM$ and the measurement outcome $r$.
Separable states $\V{\rho}_{\mathrm{all}}\in {\Set S}_{\mathrm{sp}}$ are ensembles of product states, i.e., 
$
\V{\rho}_{\mathrm{all}} = \sum_{k}p_k\V{\rho}^{(k)}_{\mathrm{all}}$,
where $\V{\rho}^{(k)}_{\mathrm{all}}\in {\Set S}_{\mathrm{id}}$, and ensemble probabilities $p_k$ satisfy $p_k\geq 0$, $\sum_{k}p_k=1$.
Consequently, for each set $\SM$, the estimation error with a separable state $\V{\rho}_{\mathrm{all}}$ and that with product states $\{\V{\rho}^{(k)}_{\mathrm{all}}\}$ are related by the following inequality.
\begin{subequations} 
\begin{align}
&\mathbb{E}_{R}\big[(\check{F}-\bar{F})^2\big |\V{\rho}_{\mathrm{all}} \big]\nonumber \\
&=  \sum_{r}\mathrm{Pr}(r|\SM)
\big(\check{f}(r) -\bar{f}(\V{\rho}_{\mathrm{all}},\Set{M}, r)\big)^2\label{eqn:sp2id-a-m}\\
&= \sum_{r}\mathrm{Pr}(r|\SM)\big(\check{f}(r) - \sum_{k }
\mathrm{Pr}(\V{\rho}^{(k)}_{\mathrm{all}}| \SM, r) \bar{f}(\V{\rho}^{(k)}_{\mathrm{all}},\Set{M}, r) \big)^2 \label{eqn:sp2id-b-m}\\
&\leq \sum_{r,k}
\mathrm{Pr}(r|\SM)\mathrm{Pr}(\V{\rho}^{(k)}_{\mathrm{all}}| \SM, r)
\big(\check{f}(r) -\bar{f}(\V{\rho}^{(k)}_{\mathrm{all}},\Set{M}, r) \big)^2 \label{eqn:sp2id-c-m}\\
&= \sum_{k}p_k\!\sum_{r}
\mathrm{Pr}( r|\V{\rho}^{(k)}_{\mathrm{all}},\SM)
\big(\check{f}(r) \hspace{-0.2mm}-\hspace{-0.4mm} \bar{f}(\V{\rho}^{(k)}_{\mathrm{all}},\Set{M}, r)\big)^{\hspace{-0.2mm}2\hspace{-0.5mm}}\label{eqn:sp2id-d-m}\\
&= \sum_{k}p_k\mathbb{E}_{R}\big[(\check{F}-\bar{f})^2\big|\V{\rho}^{(k)}_{\mathrm{all}}\big]. \label{eqn:sp2id-e-m}
\end{align}\label{eqn:sp2id-m}
\end{subequations}
where $\mathrm{Pr}(\cdot)$ denotes the probability, \eqref{eqn:sp2id-c-m} is true because
$(\sum_{k}w_kx_k)^2 \leq \sum_{k}w_kx^2_k$,
$\forall\, x_k \in \mathbb{R}, w_k\geq 0, \sum_{k}w_k=1$, and \eqref{eqn:sp2id-d-m} holds because according to Bayes' theorem
\begin{align}
\mathrm{Pr}(r|\SM)\mathrm{Pr}(\V{\rho}^{(k)}_{\mathrm{all}}| \SM, r) &=\mathrm{Pr}(\V{\rho}^{(k)}_{\mathrm{all}}|\SM)\mathrm{Pr}(r|\V{\rho}^{(k)}_{\mathrm{all}},\SM)
\nonumber\\
&=\mathrm{Pr}(\V{\rho}^{(k)}_{\mathrm{all}})\mathrm{Pr}(r|\V{\rho}^{(k)}_{\mathrm{all}},\SM)\nonumber\\
&=p_k \mathrm{Pr}(r|\V{\rho}^{(k)}_{\mathrm{all}},\SM),\label{eqn:Bayes}
\end{align}
where the second equation holds becase that the sampling set $\Set M$ and the state $\V{\rho}^{(k)}_{\mathrm{all}}$ are independent.
Because $\V{\rho}_{\mathrm{all}}\in \Set S_{\mathrm{sp}}$ and $\V{\rho}^{(k)}_{\mathrm{all}}\in \Set S_{\mathrm{id}}$, $\forall k$,
\eqref{eqn:sp2id-m} shows that the minimum estimation error of $\Ps$-\ref{prob:sp-m} is upper bounded by that of $\Ps$-\ref{prob:id-m}.

\begin{Remark}[Limit the effect of correlation]
\label{remark:neutralize}
The result of step A2 is unexpected because correlated noise usually leads to a higher estimation error compared to independent noise.
To understand this result, recall that the estimation target, i.e., the average fidelity of the unsampled qubit pairs, is evaluated conditioned on the measurement outcome.
Consequently,  the conditional distribution $\mathrm{Pr}(\V{\rho}^{(k)}_{\mathrm{all}}| \SM, r)$ appears in the expression of the estimation error, i.e., \eqref{eqn:sp2id-b-m}.
This conditional probability allows the application of Bayes' theorem in \eqref{eqn:Bayes}, 
which bounds the estimation error of the states in $\Set S_{\mathrm{sp}}$ by that of the states in $\Set S_{\mathrm{id}}$.~\hfill\qed
\end{Remark}

\item{\bf A3.} This substep shows that to solve $\Ps$-\ref{prob:id-m},
it is sufficient to minimize the estimation error of the sampled pairs, i.e.,

\begin{Problem}\label{prob:id2-m}
\end{Problem}
\vspace*{-8mm}
\begin{align}
 \underset{\Set O, \Set D}{\mathrm{minimize}}\;\;& \max_{\scriptstyle \V{\rho}_{\mathrm{all}}\in\atop \scriptstyle{\Set S}_{\mathrm{id}}(\V{f}_{\mathrm{all}})}
 {N\choose M}^{\!-1\!}\sum_{\SM}\mathbb{E}_{R}\big[(\check{F}-\bar{f}_{\SM})^2\big] \nonumber \\
\text{subject to}\;\;
&\mathbb{E}_{R}\big[\check{F}-\bar{f}_{\SM} \big]=0,\, \forall \V{\rho}_{\mathrm{all}}\in{\Set S}_{\mathrm{id}}(\V{f}_{\mathrm{all}}),  \SM \subset \Set N\nonumber
\end{align}
where ${\Set S}_{\mathrm{id}}(\V{f}_{\mathrm{all}})=\{\otimes^{n\in \Set N }\V{\rho}_n\}$ is the set of product states with fidelity composition $\V{f}_{\mathrm{all}}=\{f_n,n\in \Set N\}$, in which $f_n $ is the fidelity of $\V{\rho}_n$, and
$ \bar{f}_{\SM}$ is the average fidelity of the sampled qubit pairs.

In the case of independent noise, the measurement does not affect the fidelity of the unsampled qubit pairs.
Consequently, the estimation error of $\Ps$-\ref{prob:id-m} can be decomposed into two parts, namely the estimation error of the sampled qubit pairs,
\begin{align}
{N\choose M}^{\!-1\!}\sum_{\SM}\mathbb{E}_{R}\big[(\check{F}-\bar{f}_{\SM} )^2\big],\label{eqn:error-measure}
\end{align}
and the sampling error, i.e., the deviation between the average fidelity of the sampled and the unsampled qubit pairs,
\begin{align}
{N\choose M}^{\!-1\!}\sum_{\SM}(\bar{f}_{\SM} -\bar{f})^2.\,\label{eqn:error-sample}
\end{align}
The sampling error \eqref{eqn:error-sample} is not affected by the estimation protocol.
Therefore, $\Ps$-\ref{prob:id-m} can be simplified to $\Ps$-\ref{prob:id2-m}, which minimizes \eqref{eqn:error-measure}.
\end{itemize}

\section{Operation Construction}
\label{sec:construction}

The next step is to construct the optimal solution of $\Ps$-\ref{prob:1-m}, which involves the following two substeps.
\begin{itemize}[leftmargin=0mm]
\item{\bf B1.} This substep characterizes the minimum estimation error of $\Ps$-\ref{prob:id2-m}.

The characterization of the minimal estimation error has been a subject of intense research.
For $\Ps$-\ref{prob:id2-m}, the Cram\'er-Rao bound \cite{Cra:B99,Rao:B94} characterizes the lower bound of the estimation error for a given measurement operation $\Set O$,
and the quantum Fisher information \cite{Paris:09,Saf:18} identifies this lower bound under the condition that all measurement operations are available.

However, $\Ps$-\ref{prob:id2-m} aims to minimize the estimation error when all separable measurement operations are available.
Therefore, the Cram\'er-Rao bound alone does not provide the lowest bound, while the quantum Fisher information provides a lower bound that is infeasible.
We will close this gap by first further simplifying $\Ps$-\ref{prob:id2}, 
and then characterizing the limit of separable operators.

The derivation of a Cram\'er-Rao bound \cite{Cra:B99,Rao:B94} requires knowledge of the distribution of a measurement outcome.
In $\Ps$-\ref{prob:id2-m}, this distribution is determined by the state of the qubit pairs $\V{\rho}_{\mathrm{all}}$ and the measurement operation $\Set O$.
However, even with independent noise, the state of each qubit pair $\V{\rho}_{n}\in \Set H^{4\times 4}$ has several parameters other than the fidelity.
This fact complicates the expression of the measurement outcome and makes the analysis of the Cram\'er-Rao bound not feasible.
To overcome this challenge, we show that the minimum estimation error of general independent states is bounded below by that of independent Werner states, i.e., 
\begin{Problem} \label{prob:idw-m}
Special case of $\Ps$-\ref{prob:id2-m}, with ${\Set S}_{\mathrm{id}}(\V{f}_{\mathrm{all}})$ replaced by 
${\Set S}_{\mathrm{w}}(\V{f}_{\mathrm{all}}) = \{\otimes^{n\in \Set N }\V{\sigma}_n\}$,
where  
\begin{align}
\V{\sigma}_n = f_n|\Psi^-\rangle\langle\Psi^-| + \frac{1-f_n}{3}(\mathbb{I}_4-|\Psi^-\rangle\langle\Psi^-|).\label{eqn:sigma-m}
\end{align}
\end{Problem}

Specifically, recall the bilateral rotation operation $\Set B$ in \cite{BenBraPopSchSmoWoo:J96}, which transforms generic states of a qubit pair into the Werner state with the same fidelity.
Using logic similar to that in step~A1, one can see that because the states in ${\Set S}_{\mathrm{id}}(\V{f}_{\mathrm{all}})$
can be transformed into those in ${\Set S}_{\mathrm{w}}(\V{f}_{\mathrm{all}})$ via the separable operation $\Set B$, the minimum estimation error of $\Ps$-\ref{prob:id2-m} is no higher than that of $\Ps$-\ref{prob:idw-m}.
The states in ${\Set S}_{\mathrm{w}}(\V{f}_{\mathrm{all}})$ are fully parameterized by the fidelity composition $\V{f}_{\mathrm{all}}$, which allows the Cram\'er-Rao bound analysis of $\Ps$-\ref{prob:idw-m}.

The other factor affecting the distribution of measurement outcome is the separable measurement operation $\Set O$.
Since the fidelity of each separable state \ac{w.r.t.} a maximally entangled state lies in the interval $[0,\frac{1}{2}]$ \cite{HorHorHorHor:J09}, we have that
for any separable operator $\M{M}$,
\begin{align}
0\leq \frac{\mathrm{Tr}(|\Psi^-\rangle\langle\Psi^-|\M{M})}{\mathrm{Tr}(\M{M} )}\leq \frac{1}{2}.\label{eqn:maxcor-m}
\end{align} 
For all separable operators, \eqref{eqn:maxcor-m} limits the sensitivity of the measurement outcome to the changes in fidelity.
Plugging this result into the Cram\'er-Rao bound, we lower bound the estimation error of $\Ps$-\ref{prob:idw-m} by
\begin{align}
\sum_{n\in\Set N}\frac{(2f_n+1)(1-f_n)}{2MN}.
\label{eqn:errorbound-m}
\end{align}

\item{\bf B2.} The second substep constructs an estimation protocol that is the optimal solution of $\Ps$-\ref{prob:1-m}.

\begin{figure}
\vspace{-2mm}
\begin{algorithm}[H]
\caption{Fidelity estimation}\label{alg:fidelityest}
\begin{algorithmic}[1]
\State{\em Preprocessing.} The nodes select the sample set $\SM$ completely at random, i.e,  select $\Set{M}$ from all $M$-subsets of $\Set N$ with equal probability, and generate $M$ number of \ac{i.i.d.} random variables $A_n \in\{x,y,z\}$, $n\in \SM$,
with distribution 
$
\Pr[A_n=u]=\frac{1}{3}, \quad u\in\{x,y,z\}.
$
\State{\em Perform measurements.}
For the qubit pair $n\in \Set{M}$,  both nodes measure the qubit in the $A_n$-basis.
If the measurement results of the two nodes match, record measurement outcome  $r_{n} = 1$, otherwise record $r_{n} = 0$.
\State{\em Estimate fidelity.} The number of errors and the \ac{QBER} are expressed as
$
e_{\SM} = \sum_{n\in\SM} r_{n}$, and
$\varepsilon_{\SM}= \dfrac{e_{\SM}}{M}$,
respectively. The estimated fidelity is
\begin{align}
\check{f}= 1- \frac{3}{2}\varepsilon_{\SM}.\label{eqn:checkF}
\end{align}
\end{algorithmic}
\end{algorithm}
\end{figure}

The analysis in step~B1 shows that to achieve the Cram\'er-Rao bound, each measurement operator must balance either of the two inequalities in \eqref{eqn:maxcor-m}
 (For details, see the proof of \thref{thm:min}, the derivation from \eqref{eqn:Tr-rnk} to \eqref{eqn:error-combine-2}).
According to this principle, the measurement operation $\Set O^*$ of Protocol~\ref{alg:fidelityest} is constructed.
For example, when nodes measure in the $z$-basis, the operators corresponding to $r=0$ and $r=1$ are respectively
\begin{align}
\begin{split}
&\hspace{-2mm}\M{M}_{z,0}\hspace{-0.2mm}=\hspace{-0.2mm}\ket{01}\!\bra{01} + \ket{10}\!\bra{10}\hspace{-0.2mm}=\hspace{-0.2mm} |\Psi^-\rangle\hspace{-0.3mm}\langle\Psi^-| + |\Psi^+\rangle\hspace{-0.3mm}\langle\Psi^+|,\\
&\hspace{-2mm}\M{M}_{z,1}\hspace{-0.2mm}=\hspace{-0.2mm}\ket{00}\!\bra{00} + \ket{11}\!\bra{11}\hspace{-0.2mm}=\hspace{-0.2mm} |\Phi^-\rangle\langle\Phi^-| + |\Phi^+\rangle\langle\Phi^+|.
\end{split} \label{eqn:protocloperators-m}
\end{align}
According to \eqref{eqn:protocloperators-m}, $\M{M}_{z,0}$ and $\M{M}_{z,1}$ balance the second and first inequalities of \eqref{eqn:maxcor-m}, respectively.
Moreover, according to the preprocessing step, the basis of measurement on each qubit pair are chosen in an \ac{i.i.d.} manner.
With these properties, we show that in scenarios with independent noise, the measurement outcome $R_n\in \Set M$ are independent random variables with variance 
\begin{align}
\frac{2(2f_n+1)(1-f_n)}{9}.\label{eqn:VRn-m}
\end{align}
Further noticing that the sample set $\Set{M}$ is selected from all $M$-subsets of $\Set N$ with equal probability, we substitute \eqref{eqn:VRn-m} into \eqref{eqn:checkF} and obtain that the operation $\Set O^*$ achieves the minimum estimation error given by \eqref{eqn:errorbound-m} for states in $\Set{S}_{\mathrm{id}}$.
Therefore,  $\Set O^*$ is optimal for $\Ps$-\ref{prob:id2-m}.
In this case, according to the problem transformation step, the composite operation $\hat{\Set O}^* = \Set O^* \circ \Set T$ is optimal for $\Ps$-\ref{prob:1-m}.

To further simplify the measurement operation, we note that according to \eqref{eqn:R-phi-m}, 
the operation $\Set{T}$ on each qubit pair can be expressed by the following four Kraus operators
\begin{align}
\M{T}_{\phi} &=  |\phi\rangle\langle \phi |, \quad \mbox{where}\quad \phi \in \{\Phi^\pm, \Psi^\pm\}.\label{eqn:KrausT-m}
\end{align}
Substituting \eqref{eqn:KrausT-m} into \eqref{eqn:protocloperators-m} shows that the operation $\Set T$ does not change the operators of $\Set O^*$, e.g.,
\begin{align}
\M{M}_{z,r} &= \sum_{\phi \in \{\Phi^\pm, \Psi^\pm\}} \M{T}^\dag_{\phi}\M{M}_{z,r}\M{T}_{\phi},\;\; r\in\{0,1\},\label{eqn:TMur-m}
\end{align}
\eqref{eqn:TMur-m} shows that  
$
\hat{\Set O}^* = \Set O^* \circ \Set T =\Set O^*
$,
i.e., $\hat{\Set O}^*$ and $\Set O^*$ are equivalent.
Therefore, $\Set O^*$ is optimal for $\Ps$-\ref{prob:1-m}.
\end{itemize}

\begin{Remark}[The advantage of Protocol~\ref{alg:fidelityest}]
It may seem that using bilateral Pauli measurements as in Protocol~\ref{alg:fidelityest} is a quite natural choice for estimating the fidelity of entangled qubit pairs.
In fact, using these standard measurements to obtain optimal performance in scenarios with arbitrary noise is a key advantage of Protocol~\ref{alg:fidelityest}.

To ensure implementability of Protocol~\ref{alg:fidelityest}, simpler operations are chosen with priority.
In fact, by using more complicated operations, one can obtain optimal solutions of $\Ps$-\ref{prob:1} other than Protocol~\ref{alg:fidelityest}.
For example,  consider a protocol in which the nodes perform the operation $\Set{T}$ defined in \eqref{eqn:R-phi-m} before making the Pauli measurements in Protocol~\ref{alg:fidelityest}. 
Then according to the paragraph after \eqref{eqn:protocloperators-m}, this protocol is also optimal for $\Ps$-\ref{prob:1}.
We have made additional efforts to simplify the measurement operators while preserving the optimality of the protocol.

To achieve optimal performance in scenarios with arbitrary noise, the preprocessing step, i.e., step 1, is developed.
The complete randomness of the sampling set $\Set M$, and the \ac{i.i.d.} distribution of the bases of the measurements are necessary for handling arbitrary noise.
For example, when the measurements in the $x$-, $y$-, or $z$-basis are made in a clustered manner, the estimated fidelity is biased or has a higher estimation error for some non-\ac{i.i.d.} noises.
The preprocessing in step 1 neutralizes the effect of the arbitrary nose without compromising the implementability of Protocol~\ref{alg:fidelityest}.
~\qed\end{Remark}

\section{Protocol Evaluation}
\label{sec:sim}
In this section, we evaluate the proposed protocol for fidelity estimation by comparing it to existing fidelity estimation protocols.

\noindent {\bf Noise model:}
\label{subsec:perf_eva}
The noise is modeled as correlated and heterogeneous depolarizing channels.
Specifically, the corresponding density matrix of all qubit pairs is
\begin{align}
\V{\rho}_{\mathrm{all}} = \frac{1}{2}\Big(&
\big(\otimes^{\frac{N}{4}} \V{\rho}^{(g)}\big) \otimes \big(\otimes^{\frac{3N}{4}} \V{\rho}^{(b)}\big)+\nonumber\\
&\big(\otimes^{\frac{3N}{4}} \V{\rho}^{(g)}\big) \otimes \big(\otimes^{\frac{N}{4}} \V{\rho}^{(b)}\big)\Big), 
\end{align}
where the state of each qubit pair
\begin{align}
\V{\rho}^{(s)} =  p^{(s)}\frac{\mathbb{I}_4}{4} + (1-p^{(s)})|\Psi^-\rangle\langle\Psi^-|, \quad s\in\{g,b\}
\end{align}
in which the error probabilities $p^{(s)}$, $s\in\{g,b\}$ of good and bad channels satisfy $0\leq p^{(g)} \leq  p^{(b)} \leq 1$,
and $N$ is a multiple of $4$.

The mean of the error probabilities of the bad and good channels, i.e., 
\begin{align}
p = \frac{p^{(b)}+p^{(g)}}{2}\in[0,1],
\end{align}
represents the noisy intensity, and the difference between the two probabilities, i.e.,
\begin{align}
d= p^{(b)}-p^{(g)}\in[0,1],
\end{align} 
represents the degree of correlation and heterogeneity of the noise.
In particular,  the noise is \ac{i.i.d.}~when $d=0$. 
~\QEDB

\begin{figure} \centering
\psfrag{0}[Br][Br][0.65]{0\hspace{0.3mm}}
\psfrag{0.2}[Br][Br][0.65]{0.2\hspace{0.3mm}}
\psfrag{0.4}[Br][Br][0.65]{0.4\hspace{0.3mm}}
\psfrag{0.6}[Br][Br][0.65]{0.6\hspace{0.3mm}}
\psfrag{0.8}[Br][Br][0.65]{0.8\hspace{0.3mm}}
\psfrag{0.9}[Br][Br][0.65]{0.9\hspace{0.3mm}}
\psfrag{1}[Br][Br][0.65]{1.0\hspace{-0.2mm}}
\psfrag{1.1}[Br][Br][0.65]{1.1\hspace{-0.2mm}}
\psfrag{1.2}[Br][Br][0.65]{1.2\hspace{0.3mm}}
\psfrag{1.3}[Br][Br][0.65]{1.3\hspace{0.3mm}}
\psfrag{1.4}[Br][Br][0.65]{1.4\hspace{0.3mm}}
\psfrag{1.5}[Br][Br][0.65]{1.5\hspace{0.3mm}}
\psfrag{1.6}[Br][Br][0.65]{1.6\hspace{0.3mm}}
\psfrag{1.8}[Br][Br][0.65]{1.8\hspace{0.3mm}}
\psfrag{2}[Br][Br][0.65]{2.0\hspace{0.3mm}}
\psfrag{10-3}[bc][bc][0.7]{$\times 10^{-3}$}
\psfrag{1.0}[tt][tt][0.7]{1}
\psfrag{0.90}[tt][tt][0.7]{0.9}
\psfrag{0.80}[tt][tt][0.7]{0.8}
\psfrag{0.70}[tt][tt][0.7]{0.7}
\psfrag{0.60}[tt][tt][0.7]{0.6}
\psfrag{0.50}[tt][tt][0.7]{0.5}
\psfrag{0.40}[tt][tt][0.7]{0.4}
\psfrag{0.30}[tt][tt][0.7]{0.3}
\psfrag{0.20}[tt][tt][0.7]{0.2}
\psfrag{0}[tt][tt][0.7]{0}
\psfrag{P}[tc][tc][0.8]{$p$}
\psfrag{Dif}[tc][tc][0.8]{$d$}
\psfrag{A}[tc][tc][0.8]{A}
\psfrag{B}[tc][tc][0.8]{B}
\psfrag{Var}[bc][cc][0.8]{$\mathrm{Var}(\check{F})$}
\psfrag{proposed                                                  1}[bl][bl][0.73]{\hspace{-0mm}proposed protocol}
\psfrag{Guh}[cl][cl][0.73]{\hspace{-0mm}protocol proposed in  \cite{GuhLuGaoPan:J07}}
\psfrag{Fla}[cl][cl][0.73]{\hspace{-0mm}protocol proposed in  \cite{FlaLiu:J11}}
\hspace*{-4mm}\includegraphics[scale=0.48]{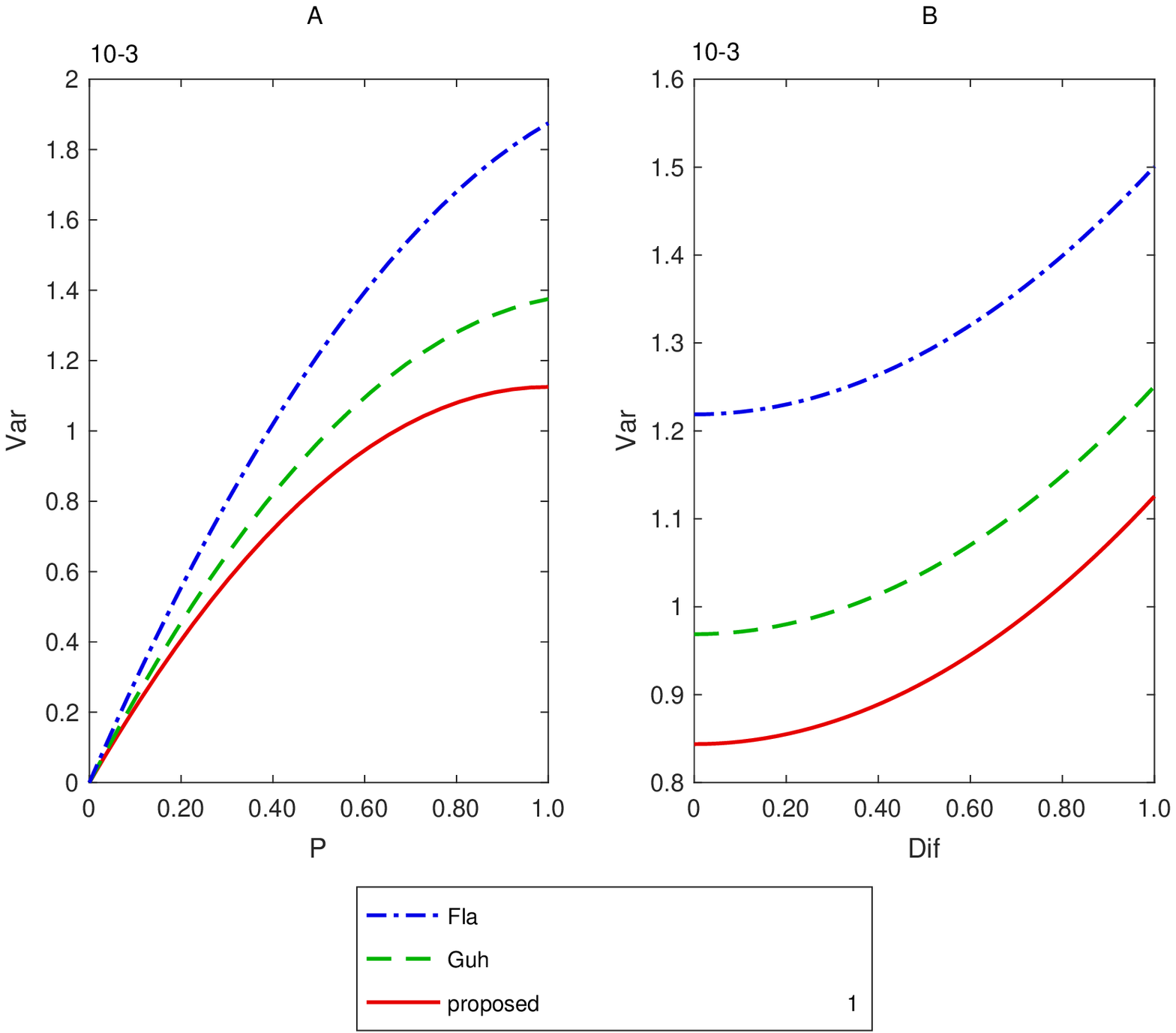}
\caption {\small The variance of the estimated fidelity given by different protocols as a function of the noisy intensity $p$ and the degree of correlation and heterogeneity $d$. 
In this figure, $N=1000$, $M=500$. In subfigure A, $d=0$, i.e., the noise is \ac{i.i.d.}. In subfigure B, $p=0.5$.
}
\label{fig_Measure_Var}
\end{figure}

Fig.~\ref{fig_Measure_Var} shows the mean squared estimation error, i.e., $\mathrm{Var}(\check{F})$, as a function of the noisy intensity $p$ and the degree of correlation and heterogeneity $d$.
The proposed protocol is compared with those proposed in  \cite{FlaLiu:J11} and \cite{GuhLuGaoPan:J07}.
This figure shows that the estimation error is an increasing function of both $p$ and $d$.
In all cases, the proposed protocol has the lowest estimation error. 
This result is consistent with \thref{thm:opt}, which states that Protocol~\ref{alg:fidelityest} is an optimal solution of $\Ps$-\ref{prob:1-m}.

\section{Conclusion}
\label{sec:sum}
In this paper, we propose a protocol to estimate the fidelity of entangled qubit pairs shared by remote nodes.
The proposed protocol uses only Pauli measurements, and 
achieves the minimum mean squared error in a challenging scenario with arbitrary noise and no prior information.
The numerical tests confirm the efficiency and reliability of the proposed protocol.

\section {Acknowledgement}
 
The author would like to thank Stephanie Wehner and Bas Dirkse at TUDelft, The Netherlands, for refining the structure of the paper and stimulating discussions on the treatment of non-\ac{i.i.d.} noise.
The author is also indebted to Wenhan Dai at UMass, USA, who distilled the logic of the derivation and proofread the manuscript.

The author thanks the EU Flagship on Quantum Technologies and the Quantum Internet Alliance (funded by the EU Horizon 2020 Research and Innovation Programme, Grant No. 820445).

\appendix
\renewcommand\thesubsection{\thesection\arabic{subsection}}

\section*{Notations}
In this document, random variables and their realizations are represented in upper and lower case letters, respectively, e.g. $F$ and $f$. 
Vectors and matrices are denoted by bold letters, e.g., $\V{\rho}$. 
$\mathbb{E}[\cdot]$, and $\mathbb{V}[\cdot]$ denote the expectation and variance of a random variable, respectively.
$\mathbb{I}_n$ denotes an $n\times n$ identity matrix.
$\mathrm{Pr}(\cdot)$ denotes the probability of an event,
$\mathrm{Tr}(\cdot)$ denotes the trace of a matrix, and $\mathrm{Tr}_{n}(\cdot)$ denotes the partial trace of the $n$-th subsystem.
\label{sec:symbol}

\section{Problem Transformation}
\label{sec:simplify}
Recall the problem of minimizing the mean squared estimation error of fidelity in scenarios with arbitrary noise, i.e.,
\begin{ProblemA}[Error minimization for arbitrary states] 
\label{prob:1}
\begin{subequations}
\begin{align}
 \underset{\Set O, \Set D}{\mathrm{minimize}}\;\;&\max_{\V{\rho}_{\mathrm{all}}\in{\Set S}_{\mathrm{arb}}}
{N \choose M}^{\!-1\!}\sum_{\Set M}\mathbb{E}_{R}\big[(\check{F}-\bar{F})^2\big] \label{eqn:minvar_0}\\
\text{subject to}\;\;
&\sum_{\SM}\mathbb{E}_{R}\big[\check{F}-\bar{F}\big]=0 ,\;\; \forall \V{\rho}_{\mathrm{all}}\in{{\Set S}_{\mathrm{arb}}}
\label{eqn:unbiased_0} 
\end{align}
\end{subequations}
\end{ProblemA}
\noindent In this section, we will prove that in order to solve $\Ps$-\ref{prob:1}, it suffices to consider the scenario with independent noise.

\subsection{Sufficiency of considering classically correlated noise}
We first show that to solve $\Ps$-\ref{prob:1}, it is sufficient to consider a problem with no entanglement among different qubit pairs.
This problem is given by $\Ps$-\ref{prob:sp-m}. 
\begin{ProblemA}[Error minimization for separable states] \label{prob:sp}
\begin{subequations}
\begin{align}
 \underset{\Set O, \Set D}{\mathrm{minimize}}\;\;&\max_{\V{\rho}_{\mathrm{all}}\in{\Set S}_{\mathrm{sp}}}\sum_{\SM}\mathbb{E}_{R}\big[(\check{F}-\bar{F})^2\big] 
\label{eqn:minvar_sep}\\
\text{subject to}\;\;
&\sum_{\SM}\mathbb{E}_{R}\big[\check{F}-\bar{F}\big]=0 ,\;\; \forall \V{\rho}_{\mathrm{all}}\in{\Set S}_{\mathrm{sp}}\label{eqn:unbiased_sep}
\end{align}
\end{subequations}
where
${\Set S}_{\mathrm{sp}}$ denotes the set of states $\V{\rho}_{\mathrm{all}}$ that are separable among all qubit pairs.
\end{ProblemA}

We construct a probabilistic rotation operation as follows.
\begin{Def}[Probabilistic bilateral rotation]\thlabel{def:T}
The operation $\Set T$ acts independently on each qubit pair.
For each pair, $\Set T$ rotates both qubits first along the $x$-axis for $180^\circ$ with probability 0.5,
then along the $y$-axis for $180^\circ$ with probability 0.5.
The Kraus operators of the first and second probabilistic rotations are
\begin{subequations}
\begin{align}
\M{T}_{x,1} &= \frac{\mathbb{I}_4}{\sqrt{2}}, \quad \M{T}_{x,2} = -\frac{\V{\sigma}_x \otimes \V{\sigma}_x}{\sqrt{2}},\quad \mbox{ and}
\label{eqn:Rx}
\\
\M{T}_{y,1} &= \frac{\mathbb{I}_4}{\sqrt{2}}, \quad \M{T}_{y,2} = -\frac{\V{\sigma}_y \otimes \V{\sigma}_y}{\sqrt{2}},
\label{eqn:Ry}
\end{align}\label{eqn:Rxy}
\end{subequations}
respectively.
\end{Def}

The following lemma shows the equivalence of  $\Ps$-\ref{prob:1} and $\Ps$-\ref{prob:sp}.

\begin{LemA}[Equivalence of $\Ps$-\ref{prob:1} and $\Ps$-\ref{prob:sp}] \thlabel{lem:sp2arb}
If a measurement operation $\Set O^*$ is optimal in $\Ps$-\ref{prob:sp}, 
the composite operation $\hat{\Set O}^* = \Set O^* \circ \Set T$ is optimal in $\Ps$-\ref{prob:1}.
\end{LemA}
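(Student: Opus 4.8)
The plan is to show that the composite operation $\hat{\Set O}^* = \Set O^* \circ \Set T$, together with an appropriately chosen estimator, attains in $\Ps$-\ref{prob:1} a worst-case error no larger than the optimal value of $\Ps$-\ref{prob:sp}, and conversely that the optimal value of $\Ps$-\ref{prob:1} cannot be smaller than that of $\Ps$-\ref{prob:sp}. Since $\Set S_{\mathrm{sp}}\subseteq\Set S_{\mathrm{arb}}$, the latter inequality is immediate: any feasible protocol for $\Ps$-\ref{prob:1} is feasible for $\Ps$-\ref{prob:sp} and has weakly larger worst-case error over the smaller set, so $\mathrm{opt}(\Ps\text{-}\ref{prob:1})\ge\mathrm{opt}(\Ps\text{-}\ref{prob:sp})$. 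The substance is the reverse direction.

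First I would establish the two defining properties of $\Set T$ recorded in \eqref{eqn:R-phi-m}: that $\Set T$ is a valid quantum channel acting independently on each pair (this is just checking $\sum \M{T}^\dagger\M{T}=\mathbb{I}$ for the Kraus operators in \thref{def:T}, i.e.\ the two probabilistic $180^\circ$ rotations compose to a CPTP map), that it fixes each $|\phi\rangle\langle\phi|$ for $\phi\in\{\Phi^\pm,\Psi^\pm\}$, and that it kills every off-diagonal $|\phi\rangle\langle\psi|$, $\phi\ne\psi$. The fidelity-preservation claim \eqref{eqn:R-phi-m-1} then follows because the fidelity of $\V{\rho}_n$ w.r.t.\ $|\Psi^-\rangle$ is the $\Psi^-\Psi^-$ diagonal entry in the Bell basis, which $\Set T$ leaves untouched; and since $\Set T$ acts pairwise-independently, $\Set T(\V{\rho}_{\mathrm{all}})$ is Bell-diagonal on each pair and hence lies in $\Set S_{\mathrm{sp}}$. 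I would verify that $\bar f$ (the quantity in \eqref{eqn:Fbar}) is unchanged when $\V{\rho}_{\mathrm{all}}$ is replaced by $\Set T(\V{\rho}_{\mathrm{all}})$, because each marginal fidelity $\langle\Psi^-|\V{\rho}^{(r)}_n|\Psi^-\rangle$ is preserved — here one must be slightly careful that the conditioning on the outcome $r$ commutes appropriately with $\Set T$, which it does because $\Set T$ is applied before the measurement $\Set O^*$ and acts as a pre-processing channel.

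Next, the key computation: for any $\V{\rho}_{\mathrm{all}}\in\Set S_{\mathrm{arb}}$, running $\hat{\Set O}^*$ on $\V{\rho}_{\mathrm{all}}$ produces exactly the same outcome statistics as running $\Set O^*$ on $\sigma:=\Set T(\V{\rho}_{\mathrm{all}})\in\Set S_{\mathrm{sp}}$, for every sample set $\Set M$ — this is just the definition of the composite channel and the cyclicity of the trace, $\mathrm{Tr}(\M{M}_r\,\Set T(\V{\rho}))=\mathrm{Tr}(\Set T^\dagger(\M{M}_r)\,\V{\rho})$, read the other way. Using the estimator $\Set D^*$ that is optimal for $\sigma$ in $\Ps$-\ref{prob:sp}, and the facts that $\bar f(\V{\rho}_{\mathrm{all}})=\bar f(\sigma)$ and that the conditional target fidelities agree, the squared-error objective and the unbiasedness constraint \eqref{eqn:unbiased_0} evaluated at $\V{\rho}_{\mathrm{all}}$ under $\hat{\Set O}^*$ coincide with those evaluated at $\sigma$ under $\Set O^*$. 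Hence $\hat{\Set O}^*$ is feasible for $\Ps$-\ref{prob:1} and
\[
\max_{\V{\rho}_{\mathrm{all}}\in\Set S_{\mathrm{arb}}}(\text{error under }\hat{\Set O}^*)
=\max_{\V{\rho}_{\mathrm{all}}\in\Set S_{\mathrm{arb}}}(\text{error at }\Set T(\V{\rho}_{\mathrm{all}})\text{ under }\Set O^*)
\le\max_{\sigma\in\Set S_{\mathrm{sp}}}(\text{error under }\Set O^*)=\mathrm{opt}(\Ps\text{-}\ref{prob:sp}),
\]
where the inequality uses $\Set T(\Set S_{\mathrm{arb}})\subseteq\Set S_{\mathrm{sp}}$. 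Combining with the trivial direction gives $\mathrm{opt}(\Ps\text{-}\ref{prob:1})=\mathrm{opt}(\Ps\text{-}\ref{prob:sp})$ and shows $\hat{\Set O}^*$ attains it, i.e.\ $\hat{\Set O}^*$ is optimal in $\Ps$-\ref{prob:1}.

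I expect the main obstacle to be the bookkeeping around the \emph{conditional} quantities rather than any deep idea: one must check that replacing $\V{\rho}_{\mathrm{all}}$ by $\Set T(\V{\rho}_{\mathrm{all}})$ really does leave invariant not just the unconditional average fidelity but the whole joint distribution of $(\check F,\bar F)$ — equivalently the joint distribution of the outcome $R$ and the post-measurement marginal fidelities on $\Set N\setminus\Set M$ — and that this holds simultaneously for all $\binom{N}{M}$ sample sets $\Set M$, so that the unbiasedness constraint transfers cleanly in both directions. A secondary point to state carefully is that $\Set T^\dagger$ maps separable (POVM) operators to separable operators, so that $\hat{\Set O}^*=\Set O^*\circ\Set T$ is still a legitimate separable measurement in the sense of \eqref{eqn:MXsep}; this is clear because $\Set T$ is itself local and separable, so composing it with $\Set O^*$ preserves separability.
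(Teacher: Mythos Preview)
Your proposal is correct and follows essentially the same route as the paper: establish the two properties \eqref{eqn:R-phi-m} of $\Set T$, use them to show $\Set T(\Set S_{\mathrm{arb}})\subseteq\Set S_{\mathrm{sp}}$ while preserving all relevant fidelities, and then run the sandwich argument $\mathrm{opt}(\Ps\text{-}\ref{prob:sp})\le e(\Set S_{\mathrm{arb}},\hat{\Set O}^*,\Set D^*)\le\mathrm{opt}(\Ps\text{-}\ref{prob:sp})$ together with the trivial inclusion direction. The paper handles the ``bookkeeping around conditional quantities'' that you flag by explicitly noting that $\hat{\Set O}^*=\Set O^*\circ\Set T$ (which acts only on $\Set M$) is equivalent to applying $\Set T$ on \emph{all} pairs before $\Set O^*$, since $\Set T$ on the unsampled pairs commutes with the measurement and leaves each post-measurement fidelity $\langle\Psi^-|\V\rho_n^{(r)}|\Psi^-\rangle$ unchanged; this is exactly the check you anticipate, and once stated the invariance of the joint law of $(\check F,\bar F)$ follows immediately.
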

\begin{proof}
The four Bell states $\{|\Phi^\pm\rangle, |\Psi^\pm\rangle\}$ form a basis for a qubit pair. 
Therefore, the state of each qubit pair $n \in \Set N$ can be written as
\begin{align}
\V{\rho}_n = \sum_{k\in \Set K_n} c_{k,n} |\phi_{k}\rangle\langle \psi_{k} |,
\label{eqn:Rhon}
\end{align}
where $|\phi_{k}\rangle, |\psi_{k}\rangle$ are Bell states, i.e., $|\Phi^\pm\rangle$, $|\Psi^\pm\rangle$, and $\Set K_n$ is the set of all terms with non-zero coefficient $c_{k,n}\in\mathbb{C}$ .
Given \eqref{eqn:Rhon}, the fidelity of qubit pair $n$ is given by
\begin{align}
f_{n} = \sum_{k\in \Set K_n}&\Big( c_{k,n}
\mathds{1}\big(|\phi_{k}\rangle= |\psi_{k}\rangle= |\Psi^-\rangle \big) \Big) 
\label{eqn:fid-n}
\end{align}
where $\mathds{1}(\cdot)$ is the indicator function.

For a Bell state $|\phi\rangle \in \{|\Phi^\pm\rangle$, $|\Psi^\pm\rangle\}$,
\begin{subequations}
\begin{align}
\V{\sigma}_x \otimes \V{\sigma}_x |\phi\rangle &= 
\left\{\begin{array}{rl}
|\phi\rangle & \mbox{ if } |\phi\rangle \in \{ |\Phi^+\rangle$, $|\Psi^+\rangle \}\\
-|\phi\rangle & \mbox{ if } |\phi\rangle \in \{ |\Phi^-\rangle$, $|\Psi^-\rangle \}
\end{array}\right.\\
\V{\sigma}_y \otimes \V{\sigma}_y |\phi\rangle &= 
\left\{\begin{array}{rl}
|\phi\rangle & \mbox{ if } |\phi\rangle \in \{ |\Phi^-\rangle$, $|\Psi^+\rangle \}\\
-|\phi\rangle & \mbox{ if } |\phi\rangle \in \{ |\Phi^+\rangle$, $|\Psi^-\rangle \}
\end{array}\right.
\end{align}\label{eqn:sigma-phi}
\end{subequations}
According to \eqref{eqn:Rxy} and \eqref{eqn:sigma-phi}, for Bell states $|\phi\rangle, |\psi\rangle \in \{|\Phi^\pm\rangle$, $|\Psi^\pm\rangle\}$,
\begin{align}
\Set{T}(|\phi\rangle\langle \psi | ) &= 
\left\{\begin{array}{ll}
|\phi\rangle\langle \psi | & \mbox{ if } |\phi\rangle = |\psi\rangle \\
0 & \mbox{ if } |\phi\rangle \neq |\psi\rangle
\end{array}\right.\label{eqn:R-phi}
\end{align}
By substituting \eqref{eqn:R-phi} into \eqref{eqn:fid-n}, it is clear that the rotation $\Set T$ does not change the fidelity of any qubit pair.                                                                                                                                            

Consider a composite operation in which the rotation $\Set T$ is performed on the sampled qubit pairs $n\in \SM$ before the measurement $\Set O$.
Subsequently, this rotation is performed on the unsampled qubit pairs $n \in \Set N \backslash \SM$. 
According to the analysis described in the previous paragraph, the rotation $\Set T$ on the unsampled qubit pairs does not affect the measurement outcome $\V{r}$ nor the average fidelity of the unsampled qubit pairs $\bar{f}$.
In this sense, the considered composite operation is equivalent to the operation $\hat{\Set O} = \Set O \circ \Set T$, i.e., both the rotation and the measurement are performed only on the sampled qubit pairs.
Moreover, since the rotation $\Set T$ on the unsampled qubit pairs is commutable with the measurement  $\Set O$, the operation $\hat{\Set O}$ is also equivalent to an operation in which the rotation $\Set T$ is performed on all qubit pairs before the measurement $\Set O$.
Given the above two equivalences, when analyzing the performance of operation $\hat{\Set O}$, we consider the operation in which the rotation is performed on all qubit pairs before the measurement.

An arbitrary state of all $N$ qubit pairs can be expressed in Bell basis as follows.
\begin{align}
\V{\rho}_{\mathrm{all}} = \sum_{k\in \Set K} c_k\otimes^{n\in\Set N} |\phi_{k,n}\rangle\langle \psi_{k,n} |,
\label{eqn:Rhoall}
\end{align}
where $|\phi_{k,n}\rangle, |\psi_{k,n}\rangle$ are Bell states, i.e., $|\Phi^\pm\rangle$, $|\Psi^\pm\rangle$, and $\Set K$ is the set of all terms with non-zero coefficient $c_k\in\mathbb{C}$.
Substituting \eqref{eqn:R-phi} into \eqref{eqn:Rhoall}, we find that after performing the rotation $\Set T$ on all qubit pairs, their joint state is
\begin{align}
\Set T(\V{\rho}_{\mathrm{all}}) = \sum_{k\in \Set K} c_k\otimes^{n\in\Set N}\mathds{1}\big(|\phi_{k,n}\rangle= |\psi_{k,n}\rangle \big)|\phi_{k,n}\rangle\langle \psi_{k,n} |.
\label{eqn:Rhoall-R}
\end{align}
Equation \eqref{eqn:Rhoall-R} shows that after performing $\Set T$ on all qubit pairs, the state of all qubit pairs is a composition of product states and therefore separable.

Define the function $e(\Set S, \Set O, \Set D)$ as the worst-case squared estimation error for states in the set $\Set S$ by using the measurement operator $\Set O$ and the estimator $\Set D$, i.e.
\begin{align}
e(\Set S, \Set O, \Set D)
= \max_{\V{\rho}_{\mathrm{all}}\in{\Set S}}\sum_{\SM}\mathbb{E}_{R}\big[(\check{F}-\bar{F})^2\big|\SM, \Set O, \Set D\big].
\label{eqn:fSOD}
\end{align}
The objective functions of $\Ps$-\ref{prob:1} and $\Ps$-\ref{prob:sp} can be rewritten as
\begin{align}
\underset{\Set O, \Set D}{\mathrm{minimize}}\, e({\Set S}_{\mathrm{arb}}, \Set O, \Set D), \quad 
\underset{\Set O, \Set D}{\mathrm{minimize}}\, e(\Set S_{\mathrm{sp}}, \Set O, \Set D),
\end{align}
respectively.
 
Denote $\Set D^*$ as the estimator used with the measurement operation $\Set O^*$ to achieve the minimum mean squared error.
The following inequalities involving the objective functions of $\Ps$-\ref{prob:1} and $\Ps$-\ref{prob:sp} can be derived:
\begin{subequations}
\begin{align}
\underset{\Set O, \Set D}{\mathrm{minimize}}\, &e({\Set S}_{\mathrm{arb}}, \Set O, \Set D)\nonumber\\
\leq &e({\Set S}_{\mathrm{arb}}, \hat{\Set O}^*, \Set D^*) \label{eqn:arb-le-sp-a}\\
 \leq &e(\Set S_{\mathrm{sp}}, {\Set O}^*, \Set D^*) \label{eqn:arb-le-sp-b}\\
= & \underset{\Set O, \Set D}{\mathrm{minimize}}\,e(\Set S_{\mathrm{sp}}, \Set O, \Set D), 
\label{eqn:arb-le-sp-c}
\end{align} \label{eqn:arb-le-sp}
\end{subequations}
where \eqref{eqn:arb-le-sp-b} holds because rotation $\Set T$ converts an arbitrary state to a separable state,
and \eqref{eqn:arb-le-sp-c} holds because ${\Set O}^*, \Set D^*$ is the optimal solution to $\Ps$-\ref{prob:sp}.

Moreover, because ${\Set S}_{\mathrm{arb}} \supset \Set S_{\mathrm{sp}}$,
\begin{align}
\underset{\Set O, \Set D}{\mathrm{minimize}}\,e({\Set S}_{\mathrm{arb}}, \Set O, \Set D) 
\geq \underset{\Set O, \Set D}{\mathrm{minimize}}\, e(\Set S_{\mathrm{sp}}, \Set O, \Set D).
\label{eqn:arb-ge-sp}
\end{align}
According to \eqref{eqn:arb-le-sp} and \eqref{eqn:arb-ge-sp}, one can get that
\begin{align}
\begin{split}
\underset{\Set O, \Set D}{\mathrm{minimize}}\, e({\Set S}_{\mathrm{arb}}, \Set O, \Set D)
&= e({\Set S}_{\mathrm{arb}}, \hat{\Set O}^*, \Set D^*)
\\
&= \underset{\Set O, \Set D}{\mathrm{minimize}}\, e(\Set S_{\mathrm{sp}}, \Set O, \Set D).
\end{split}
\label{eqn:arb-eq-sp}
\end{align}
The first equality in \eqref{eqn:arb-eq-sp} shows that $\hat{\Set O}^*$ is optimal for $\Ps$-\ref{prob:1}.
This completes the proof of \thref{lem:sp2arb}.
\end{proof}

\subsection{Sufficiency of considering independent noise}
We next show that to solve $\Ps$-\ref{prob:sp}, it is sufficient to consider a problem in which there is no correlation between different qubit pairs.
This problem is given by $\Ps$-\ref{prob:id}. 
\begin{ProblemA}[Error minimization for independent states] \label{prob:id}
\begin{subequations}
\begin{align}
 \underset{\Set O, \Set D}{\mathrm{minimize}}\;\;&
 \max_{\V{\rho}_{\mathrm{all}}\in{\Set S}_{\mathrm{id}}}\sum_{\SM}\mathbb{E}_{R}\big[(\check{F}-\bar{f}\,)^2\big] 
\label{eqn:minvar_id}\\
\text{subject to}\;\;
&\sum_{\SM}\mathbb{E}_{R}\big[\check{F}-\bar{f}\,\big]=0 ,\;\; \forall \V{\rho}_{\mathrm{all}}\in{\Set S}_{\mathrm{id}}\label{eqn:unbiased_id}
\end{align}\label{eqn:P3}
\end{subequations}
\end{ProblemA}
where
${\Set S}_{\mathrm{id}}$ denotes the set of states $\V{\rho}_{\mathrm{all}}$ in which the state $\V{\rho}_{n}$
of every qubit pair $n$, $n\in \Set N$, is independent, i.e., $\V{\rho}_{\mathrm{all}}$ can be expressed as
\begin{align}
\V{\rho}_{\mathrm{all}} = \otimes^{n\in\Set N} \V{\rho}_{n}.\label{eqn:rhoall-id}
\end{align}
In the case of \eqref{eqn:rhoall-id}, the average fidelity of the unsampled pairs, $\bar{f}$, is deterministic for each sample set $\SM$.
Thus, lowercase letter $\bar{f}$ is used in \eqref{eqn:P3}.

The following lemma shows that if a measurement operation $\Set O^*$ is optimal in scenarios with independent states,
it is also optimal in scenarios with classically correlated states.

\begin{LemA}[Equivalence of $\Ps$-\ref{prob:sp} and $\Ps$-\ref{prob:id}] \thlabel{lem:id2sp}
If a measurement operation $\Set O^*$ is optimal in $\Ps$-\ref{prob:id}, 
it is also optimal in $\Ps$-\ref{prob:sp}.
\end{LemA}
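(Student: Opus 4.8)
The plan is to follow the same template as the proof of \thref{lem:sp2arb}: express both problems through the worst-case error functional $e(\Set S,\Set O,\Set D)$ of \eqref{eqn:fSOD}, and then argue that $\Ps$-\ref{prob:sp} and $\Ps$-\ref{prob:id} are in fact the \emph{same} optimization problem once the convex structure of $\Set S_{\mathrm{sp}}$ is taken into account. The starting point is the observation already used in step A2: every separable state decomposes as a mixture of product states, $\V{\rho}_{\mathrm{all}}=\sum_{k}p_k\V{\rho}^{(k)}_{\mathrm{all}}$ with $\V{\rho}^{(k)}_{\mathrm{all}}\in\Set S_{\mathrm{id}}$, so $\Set S_{\mathrm{sp}}$ is the convex hull of $\Set S_{\mathrm{id}}$ and all of its extreme points already lie in $\Set S_{\mathrm{id}}$.

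First I would verify that the feasible sets of the two problems coincide, i.e.\ that a pair $(\Set O,\Set D)$ satisfies the unbiasedness constraint on $\Set S_{\mathrm{id}}$ if and only if it satisfies it on $\Set S_{\mathrm{sp}}$. One direction is trivial since $\Set S_{\mathrm{id}}\subset\Set S_{\mathrm{sp}}$. For the converse, I would fix a mixture $\V{\rho}_{\mathrm{all}}=\sum_k p_k\V{\rho}^{(k)}_{\mathrm{all}}$ and a sample set $\SM$, expand $\mathbb{E}_{R}[\check F-\bar F\mid\V{\rho}_{\mathrm{all}},\SM]$ by conditioning on the measurement outcome $r$ and on the active component $k$, and invoke Bayes' theorem exactly as in \eqref{eqn:Bayes}; the bias for the mixture then becomes the $p_k$-weighted average of the biases for the components $\V{\rho}^{(k)}_{\mathrm{all}}$, so unbiasedness on $\Set S_{\mathrm{id}}$ propagates to all of $\Set S_{\mathrm{sp}}$. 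Hence the two problems share the same set of feasible $(\Set O,\Set D)$.

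Next I would show that the objectives agree on this common feasible set, namely $e(\Set S_{\mathrm{sp}},\Set O,\Set D)=e(\Set S_{\mathrm{id}},\Set O,\Set D)$. The inequality $\geq$ is immediate from $\Set S_{\mathrm{id}}\subset\Set S_{\mathrm{sp}}$. The inequality $\leq$ is precisely the content of the chain \eqref{eqn:sp2id-m}: for any separable $\V{\rho}_{\mathrm{all}}=\sum_k p_k\V{\rho}^{(k)}_{\mathrm{all}}$ the conditional estimation error is bounded by the $p_k$-weighted average of the errors of the product components, which is at most the maximum over $k$, hence at most the worst-case error over $\Set S_{\mathrm{id}}$; taking the maximum over $\SM$ and over separable states gives the bound. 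With both the feasible region and the objective value identified, $\Ps$-\ref{prob:sp} and $\Ps$-\ref{prob:id} are the same problem, so if $\Set O^*$ (together with the estimator $\Set D^*$ achieving the minimum error for $\Set O^*$) is optimal for $\Ps$-\ref{prob:id}, it is optimal for $\Ps$-\ref{prob:sp}, which is \thref{lem:id2sp}. Optionally, one can present the conclusion in the sandwich form used for \thref{lem:sp2arb}.

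The main obstacle is the second substep, and more precisely the convexity/Bayes argument behind \eqref{eqn:sp2id-m}. One must be careful that for a classically correlated state $\bar F$ is genuinely \emph{random} — the measurement outcome carries information about which product component is active — and that the quantity to be bounded is the error \emph{conditioned on the outcome}, not an unconditional one. As stressed in \thref{remark:neutralize}, it is exactly the appearance of the posterior $\mathrm{Pr}(\V{\rho}^{(k)}_{\mathrm{all}}\mid\SM,r)$ that lets Bayes' theorem convert the elementary estimate $(\sum_k w_k x_k)^2\le\sum_k w_k x_k^2$ into a bound of the separable-state error by the product-state error; getting this accounting right — rather than trusting the naive intuition that correlations can only hurt — is the crux. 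Everything else, i.e.\ rephrasing in terms of $e(\Set S,\Set O,\Set D)$ and the set-inclusion comparison, is routine and parallels the proof of \thref{lem:sp2arb}.
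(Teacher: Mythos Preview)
Your proposal is correct and follows essentially the same route as the paper's proof: decompose a separable state into product-state components, apply the convexity/Bayes chain of \eqref{eqn:sp2id-m} to bound the error on $\Set S_{\mathrm{sp}}$ by that on $\Set S_{\mathrm{id}}$, and finish with the sandwich argument via $e(\Set S,\Set O,\Set D)$ exactly as in \thref{lem:sp2arb}. Your explicit verification that unbiasedness on $\Set S_{\mathrm{id}}$ propagates to $\Set S_{\mathrm{sp}}$ (needed for $(\Set O^*,\Set D^*)$ to be feasible in $\Ps$-\ref{prob:sp}) is a point the paper leaves implicit, so keep it.
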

\begin{proof}
A separable state $\V{\rho}_{\mathrm{all}}\in \Set{S}_{\mathrm{sp}}$ can be expressed as
\begin{align}
\V{\rho}_{\mathrm{all}} = \sum_{k\in \Set K}p_k \otimes^{n\in\Set N} \V{\rho}_{k,n}, 
\label{eqn:Rall-sep}
\end{align}
where $p_k$ is the probability of the ensemble in case $k$, with
$
p_k\geq 0$,  $\sum_{k\in \Set K}p_k=1$, and
$\V{\rho}_{k,n}$ is the density matrix of qubit pair $n$ in case $k$.
Denote 
\begin{align}
\V{\rho}^{(k)}_{\mathrm{all}} = \otimes^{n\in\Set N} \V{\rho}_{k,n},
\end{align}
 and
denote the fidelity of $\V{\rho}_{k,n}$ by $f_{k,n}$.
Then $\V{\rho}^{(k)}_{\mathrm{all}} \in \Set{S}_{\mathrm{id}}$, $\forall k \in \Set K$. 

Denote $\check{f}(r)$ as the estimated fidelity given the measurement outcome $r$, 
and denote $\bar{f}(\V{\rho}_{\mathrm{all}},\Set{M}, r)$ as the average fidelity of the unsampled qubit pairs given the state $\V{\rho}_{\mathrm{all}}$, sample set $\SM$ and measurement outcome $r$.
Given \eqref{eqn:Rall-sep}, the following inequalities hold for every estimation protocol $\{\Set O$, $\Set D\}$ and every state $\V{\rho}_{\mathrm{all}}\in \Set{S}_{\mathrm{sp}}$. 
\begin{subequations} 
\begin{align}
&\sum_{\SM}\mathbb{E}_{R}\big[(\check{F}-\bar{F})^2\big]\nonumber \\
&=  \sum_{\SM }\sum_{r}\mathrm{Pr}(r|\SM)
\big(\check{f}(r) -\bar{f}(\V{\rho}_{\mathrm{all}},\Set{M}, r)\big)^2\label{eqn:sp2id-a}\\
&= \sum_{\SM }\sum_{r }\mathrm{Pr}(r|\SM)\big(\check{f}(r) -\nonumber\\
&\hspace{13mm} \sum_{k }
\mathrm{Pr}(\V{\rho}^{(k)}_{\mathrm{all}}| \SM, r) \bar{f}(\V{\rho}^{(k)}_{\mathrm{all}},\Set{M}, r) \big)^2 \label{eqn:sp2id-b}\\
&\leq  \sum_{\SM }\sum_{r }\sum_{k}
\mathrm{Pr}(r|\SM)\mathrm{Pr}(\V{\rho}^{(k)}_{\mathrm{all}}| \SM, r)\nonumber\\
&\hspace{28.5mm} 
\big(\check{f}(r) -\bar{f}(\V{\rho}^{(k)}_{\mathrm{all}},\Set{M}, r) \big)^2 \label{eqn:sp2id-c}\\
&= \sum_{k }p_k\sum_{\SM }\sum_{r }
\mathrm{Pr}( r|\V{\rho}^{(k)}_{\mathrm{all}},\SM)\nonumber\\
&\hspace{28.5mm} 
\big(\check{f}(r) - \bar{f}(\V{\rho}^{(k)}_{\mathrm{all}},\Set{M}, r)\big)^2 \label{eqn:sp2id-d}\\
&= \sum_{k }p_k\sum_{\SM}\mathbb{E}_{R}\big[(\check{F}-\bar{f})^2\big|\V{\rho}^{(k)}_{\mathrm{all}}\big], \label{eqn:sp2id-e}\\
&\leq  \max_{\V{\rho}_{\mathrm{all}}\in{\Set S}_{\mathrm{id}}}\sum_{\SM}\mathbb{E}_{R}\big[(\check{F}-\bar{f})^2\big], \label{eqn:sp2id-f}
\end{align}\label{eqn:sp2id}
\end{subequations}
where \eqref{eqn:sp2id-c} holds because $x^2$ is a convex function and
\begin{align}
\begin{split}
\mathrm{Pr}(\V{\rho}^{(k)}_{\mathrm{all}}| \SM, r)&\geq 0, \quad \forall k \in \Set K,\\
\sum_{k \in \Set K}\mathrm{Pr}(\V{\rho}^{(k)}_{\mathrm{all}}| \SM, r) &=1,
\end{split}
\end{align} 
\eqref{eqn:sp2id-d} holds because according to Bayes' theorem
\begin{align}
\mathrm{Pr}(r|\SM)\mathrm{Pr}(\V{\rho}^{(k)}_{\mathrm{all}}| \SM, r) &=\mathrm{Pr}(\V{\rho}^{(k)}_{\mathrm{all}}|\SM)\mathrm{Pr}(r|\V{\rho}^{(k)}_{\mathrm{all}},\SM)
\nonumber\\
&=p_k \mathrm{Pr}(\SM, r|\V{\rho}^{(k)}_{\mathrm{all}}),
\end{align}
and \eqref{eqn:sp2id-f} holds because $\V{\rho}^{(k)}_{\mathrm{all}} \in \Set{S}_{\mathrm{id}}$, $p_k\geq 0$,  $\forall k \in \Set K$, and $\sum_{k}p_k=1$.

Recall the worst-case squared estimation error function $e(\Set S, \Set O, \Set D)$ defined in \eqref{eqn:fSOD}, and
denote $\Set D^*$ as the estimator used with the measurement operation $\Set O^*$ to achieve the minimum mean squared error.
In this case,
the following inequalities involving the objective functions of $\Ps$-\ref{prob:sp} and $\Ps$-\ref{prob:id} can be derived:
\begin{subequations}
\begin{align}
\underset{\Set O, \Set D}{\mathrm{minimize}}\, &e(\Set S_{\mathrm{sp}}, \Set O, \Set D)\nonumber\\
 \leq &e(\Set S_{\mathrm{sp}}, {\Set O}^*, \Set D^*) \label{eqn:sp-le-id-a}\\
  \leq &e(\Set S_{\mathrm{id}}, {\Set O}^*, \Set D^*) \label{eqn:sp-le-id-b}\\
= & \underset{\Set O, \Set D}{\mathrm{minimize}}\,e(\Set S_{\mathrm{id}}, \Set O, \Set D), 
\label{eqn:sp-le-id-c}
\end{align} \label{eqn:sp-le-id}
\end{subequations}
where \eqref{eqn:sp-le-id-b} is true because that \eqref{eqn:sp2id}  holds for every estimation protocol $\{\Set O$, $\Set D\}$ and every state $\V{\rho}_{\mathrm{all}}\in \Set{S}_{\mathrm{sp}}$, and
\eqref{eqn:sp-le-id-c} holds because $\{\Set O^*, \Set D^*\}$ is the optimal solution to $\Ps$-\ref{prob:id}.

Moreover, because $\Set S_{\mathrm{sp}} \supset \Set S_{\mathrm{id}}$,
\begin{align}
\underset{\Set O, \Set D}{\mathrm{minimize}}\,e(\Set S_{\mathrm{sp}}, \Set O, \Set D) 
\geq \underset{\Set O, \Set D}{\mathrm{minimize}}\, e(\Set S_{\mathrm{id}}, \Set O, \Set D).
\label{eqn:sp-ge-id}
\end{align}

According to \eqref{eqn:sp-le-id} and \eqref{eqn:sp-ge-id},
\begin{align}
\begin{split}
\underset{\Set O, \Set D}{\mathrm{minimize}}\, e(\Set S_{\mathrm{sp}}, \Set O, \Set D)
&= e(\Set S_{\mathrm{sp}}, \Set O^*, \Set D^*)
\\
&= \underset{\Set O, \Set D}{\mathrm{minimize}}\, e(\Set S_{\mathrm{id}}, \Set O, \Set D)
\end{split}
\label{eqn:sp-eq-id}
\end{align}
The first equality in \eqref{eqn:sp-eq-id} shows that ${\Set O}^*$ is optimal for $\Ps$-\ref{prob:sp}.
This completes the proof of \thref{lem:id2sp}.
\end{proof}

\subsection{Sufficiency of considering the sampled qubit pairs}
In a third step, we show that in the case of independent noise, to minimize the estimation error \ac{w.r.t.} the average fidelity of the unsampled qubit pairs,
it is sufficient to minimize the corresponding value of the sampled qubit pairs.
To this end, transform $\Ps$-\ref{prob:id} into $\Ps$-\ref{prob:id2}.
\begin{ProblemA}[Error minimization for sampled qubit pairs] \label{prob:id2}
\begin{subequations}
\begin{align}
 \underset{\Set O, \Set D}{\mathrm{minimize}}\;\;& \max_{\V{\rho}_{\mathrm{all}}\in{\Set S}_{\mathrm{id}}(\V{f}_{\mathrm{all}})}
 \sum_{\SM}\mathbb{E}_{R}\big[(\check{F}-\bar{f}_{\SM})^2\big] 
\label{eqn:minvar_id2}\\
\text{subject to}\;\;
&\mathbb{E}_{R}\big[\check{F}-\bar{f}_{\SM}\big| \V{\rho}_{\SM} \big]=0,\nonumber\\
&\qquad \forall \V{\rho}_{\mathrm{all}}\in{\Set S}_{\mathrm{id}}(\V{f}_{\mathrm{all}}), \; \SM \subset \Set N \label{eqn:unbiased_id2}
\end{align}
\end{subequations}
where $\V{\rho}_{\SM} $ is the state of all sampled qubit pairs, and ${\Set S}_{\mathrm{id}}(\V{f}_{\mathrm{all}})$ denotes the set of $N$ qubit pair states with independent noise and fidelity composition $\V{f}_{\mathrm{all}}=\{f_n,n\in \Set N\}$, i.e.,
\begin{align}
\V{\rho}_{\mathrm{all}} = \otimes^{n\in\Set N} \V{\rho}_{n}, 
\end{align}
in which $\V{\rho}_{n}$ is the state of the $n$-th qubit pair, 
\begin{align}
f_n= \langle\Psi^-|\V{\rho}_n |\Psi^-\rangle, \quad n\in \Set N,
\end{align}
is the fidelity of the $n$-th qubit pair, and
\begin{align}
\bar{f}_{\SM}= \frac{1}{M}\sum_{n\in \SM}f_n
\end{align}
is the average fidelity of the sampled qubit pairs.
\end{ProblemA}

The following lemma shows that a measurement operation $\Set O^*$ which is optimal for $\Ps$-\ref{prob:id2} is also optimal for $\Ps$-\ref{prob:id}.

\begin{LemA}[Equivalence of $\Ps$-\ref{prob:id} and $\Ps$-\ref{prob:id2}] \thlabel{lem:id-sample}
If a measurement operation $\Set O^*$ is optimal in $\Ps$-\ref{prob:id2} for all compositions of fidelity, i.e.,
$\forall\, {\Set S}_{\mathrm{id}}(\V{f}_{\mathrm{all}})\subset{\Set S}_{\mathrm{id}}$, 
it is also optimal in $\Ps$-\ref{prob:id}.
\end{LemA}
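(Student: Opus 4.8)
The plan is to follow the same pattern used in \thref{lem:sp2arb} and \thref{lem:id2sp}: exhibit a chain of inequalities relating the optimal values of $\Ps$-\ref{prob:id} and $\Ps$-\ref{prob:id2}, and identify the optimal solution of the latter as optimal for the former. The crucial structural fact is that, in the case of independent noise, performing the measurement $\Set O$ on the sampled pairs $n\in\SM$ leaves the unsampled pairs $n\in\Set N\backslash\SM$ and hence their average fidelity $\bar f$ completely untouched. Consequently, for a fixed sample set $\SM$ and a fixed fidelity composition $\V f_{\mathrm{all}}$, the estimation error decomposes as
\begin{align}
\mathbb{E}_{R}\big[(\check F-\bar f)^2\big|\SM\big]
= \mathbb{E}_{R}\big[(\check F-\bar f_{\SM})^2\big|\SM\big] + (\bar f_{\SM}-\bar f)^2,
\label{eqn:decomp-plan}
\end{align}
provided the cross term vanishes, which is exactly what the unbiasedness constraint \eqref{eqn:unbiased_id2} guarantees: conditioned on $\V\rho_{\SM}$, $\mathbb{E}_R[\check F-\bar f_{\SM}]=0$, while $\bar f_{\SM}-\bar f$ is a constant given $\SM$.

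First I would fix an arbitrary fidelity composition $\V f_{\mathrm{all}}$ and sum \eqref{eqn:decomp-plan} over all $M$-subsets $\SM$. The second term, $\sum_{\SM}(\bar f_{\SM}-\bar f)^2$, is the ``sampling error'': it depends only on $\V f_{\mathrm{all}}$ and the combinatorics of subset selection, not on the protocol $\{\Set O,\Set D\}$ at all. The first term is precisely the objective of $\Ps$-\ref{prob:id2}. Hence, for every protocol satisfying the unbiasedness constraints and every $\V f_{\mathrm{all}}$,
\begin{align}
\sum_{\SM}\mathbb{E}_{R}\big[(\check F-\bar f)^2\big|\V f_{\mathrm{all}}\big]
= \sum_{\SM}\mathbb{E}_{R}\big[(\check F-\bar f_{\SM})^2\big|\V f_{\mathrm{all}}\big] + C(\V f_{\mathrm{all}}),
\label{eqn:obj-plan}
\end{align}
with $C(\V f_{\mathrm{all}})$ protocol-independent. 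Taking the worst case over $\V\rho_{\mathrm{all}}\in\Set S_{\mathrm{id}}$ amounts to taking the worst case first over compositions $\V f_{\mathrm{all}}$ and then, for each $\V f_{\mathrm{all}}$, over $\Set S_{\mathrm{id}}(\V f_{\mathrm{all}})$; since $C(\V f_{\mathrm{all}})$ is an additive constant on each slice, minimizing the $\Ps$-\ref{prob:id} objective over $\{\Set O,\Set D\}$ is equivalent to minimizing $\max_{\V f_{\mathrm{all}}}\big(e_{\mathrm{id2}}(\V f_{\mathrm{all}},\Set O,\Set D)+C(\V f_{\mathrm{all}})\big)$. If $\Set O^*$ (with its estimator $\Set D^*$) minimizes $e_{\mathrm{id2}}(\V f_{\mathrm{all}},\cdot)$ \emph{simultaneously for every} $\V f_{\mathrm{all}}$ — which is the hypothesis of the lemma, ``optimal in $\Ps$-\ref{prob:id2} for all compositions of fidelity'' — then it also minimizes the $\V f_{\mathrm{all}}$-wise sum of $e_{\mathrm{id2}}$ and $C$, and therefore its maximum, and therefore the $\Ps$-\ref{prob:id} objective. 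I would then package this in the same $e(\Set S,\Set O,\Set D)$ notation as the previous two lemmas, writing a short inequality chain $\min e(\Set S_{\mathrm{id}},\cdot)\le e(\Set S_{\mathrm{id}},\Set O^*,\Set D^*) \le \sup_{\V f_{\mathrm{all}}}\big(e_{\mathrm{id2}}(\V f_{\mathrm{all}},\Set O^*,\Set D^*)+C(\V f_{\mathrm{all}})\big) = \sup_{\V f_{\mathrm{all}}}\min_{\Set O,\Set D}(\cdots) \le \min e(\Set S_{\mathrm{id}},\cdot)$, forcing equality and identifying $\Set O^*$ as optimal.

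Two points deserve care, and I expect the second to be the main obstacle. The minor point is checking that feasibility is preserved: the unbiasedness constraint \eqref{eqn:unbiased_id2} of $\Ps$-\ref{prob:id2} (conditional on $\V\rho_{\SM}$, for every $\SM$) implies the unbiasedness constraint \eqref{eqn:unbiased_id} of $\Ps$-\ref{prob:id} after averaging over $\SM$ and over the reduced states of the unsampled pairs — this is a straightforward tower-property argument and is needed to justify dropping the cross term in \eqref{eqn:decomp-plan}. The main obstacle is justifying the interchange ``$\min_{\{\Set O,\Set D\}}\max_{\V f_{\mathrm{all}}}$ decouples because $\Set O^*$ is a \emph{universal} minimizer over all slices'': one must argue that the estimator attaining optimality in $\Ps$-\ref{prob:id2} can be taken independent of $\V f_{\mathrm{all}}$ (the nodes do not know $\V f_{\mathrm{all}}$), so that $e(\Set S_{\mathrm{id}},\Set O^*,\Set D^*)$ genuinely equals $\sup_{\V f_{\mathrm{all}}}\big(e_{\mathrm{id2}}(\V f_{\mathrm{all}},\Set O^*,\Set D^*)+C(\V f_{\mathrm{all}})\big)$ rather than merely bounding it. This is exactly where the hypothesis ``optimal $\dots$ for all compositions of fidelity'' is doing its work, and I would state this carefully — noting that the protocol's estimator depends only on the measurement outcome $r$, not on $\V f_{\mathrm{all}}$, so a single $\{\Set O^*,\Set D^*\}$ that is simultaneously optimal on every slice does the job. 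The rest is bookkeeping that mirrors the proofs of \thref{lem:sp2arb} and \thref{lem:id2sp}.
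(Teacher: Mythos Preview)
Your decomposition \eqref{eqn:decomp-plan} and the overall strategy match the paper's proof. However, there is a genuine gap in how you handle the unbiasedness constraints, and it is precisely the reverse of what you label the ``minor point.''

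You check that \eqref{eqn:unbiased_id2} implies \eqref{eqn:unbiased_id}; this shows $\{\Set O^*,\Set D^*\}$ is \emph{feasible} for $\Ps$-\ref{prob:id}. But to show it is \emph{optimal} for $\Ps$-\ref{prob:id}, you must compare it against every protocol satisfying only the weaker constraint \eqref{eqn:unbiased_id}. For such a competitor the cross term in your decomposition need not vanish: it equals $2\sum_{\SM}(\bar f_{\SM}-\bar f)\,\mathbb{E}_R[\check F-\bar f_{\SM}\mid\SM]$, and \eqref{eqn:unbiased_id} (together with the sampling symmetry $\sum_{\SM}(\bar f_{\SM}-\bar f)=0$) only forces the per-$\SM$ biases $\mathbb{E}_R[\check F-\bar f_{\SM}\mid\SM]$ to \emph{sum} to zero, not to vanish individually. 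Without the decomposition holding for an arbitrary $\Ps$-\ref{prob:id}-feasible protocol, your final inequality $\sup_{\V f_{\mathrm{all}}}\min_{\Set O,\Set D}(\cdots)\le\min e(\Set S_{\mathrm{id}},\cdot)$ is unjustified: nothing you have written rules out a biased-per-$\SM$ competitor exploiting a negative cross term to beat $\{\Set O^*,\Set D^*\}$.

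The paper closes this gap by proving the non-trivial converse: any protocol satisfying \eqref{eqn:unbiased_id} for all $\V\rho_{\mathrm{all}}\in\Set S_{\mathrm{id}}$ must already satisfy \eqref{eqn:unbiased_id2} for every $\SM$. The argument is a short induction by contradiction: if some sampled state $\V\rho_{\SM}$ with $l$ copies of a single-pair state $\V\rho_m$ gives $\mathbb{E}_R[\check F-\bar f_{\SM}\mid\V\rho_{\SM}]\neq 0$, extend it to $\V\rho_{\mathrm{all}}=\V\rho_{\SM}\otimes(\otimes^{N-M}\V\rho_m)$; then \eqref{eqn:unbiased_id} forces some other sample set with at least $l+1$ copies of $\V\rho_m$ to be biased as well. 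Iterating reaches the homogeneous sample $\otimes^M\V\rho_m$, contradicting \eqref{eqn:unbiased_id} on $\V\rho_{\mathrm{all}}=\otimes^N\V\rho_m$. Once the two feasible sets are shown to coincide, your decomposition argument goes through verbatim.
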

\begin{proof}
Because the sample set $\SM$ is drawn completely at random, 
\begin{align}
\sum_{\SM}\bar{f} - \bar{f}_{\Set{M}} =0,\quad \forall \V{\rho}_{\mathrm{all}} \in \Set{S}_{\mathrm{id}}.
\label{eqn:f-f-0}
\end{align}
According to \eqref{eqn:f-f-0}, \eqref{eqn:unbiased_id} is equivalent to
\begin{align}
\sum_{\SM}\mathbb{E}_{R}\big[\check{F}-\bar{f}_{\Set{M}}\big]=0 ,\quad \forall \V{\rho}_{\mathrm{all}}\in{\Set S}_{\mathrm{id}}.\label{eqn:unbiased_id_v2}
\end{align}
\eqref{eqn:unbiased_id_v2} indicates that any estimation protocol $\{\Set O, \Set D\}$ that satisfies \eqref{eqn:unbiased_id2} for all $ {\Set S}_{\mathrm{id}}(\V{f}_{\mathrm{all}})\subset {\Set S}_{\mathrm{id}}$ and $\SM \subset \Set N$ also satisfies \eqref{eqn:unbiased_id}. 
In the following, we prove the converse statement by contradiction.

Suppose there is an estimation protocol $\{\Set O, \Set D\}$ that satisfies \eqref{eqn:unbiased_id_v2} but does not satisfy \eqref{eqn:unbiased_id2}. 
In this case, there must exist some $\V{\rho}_{\SM}$ such that
\begin{align}
\mathbb{E}_{R}\big[\check{F}-\bar{f}_{\SM}\big| \V{\rho}_{\SM}\big]\neq 0.\label{eqn:biased-id2}
\end{align}
Denote $m\in \Set M$ as one of the sampled qubit pairs, and 
denote $l$ as the number of sampled qubit pairs whose state is the same as that of the $m$-th pair, i.e.
\begin{align}
\sum_{n \in \SM}  \mathds{1}(\V{\rho}_n = \V{\rho}_m) = l ,
\end{align}
and denote the state $\V{\rho}_{\SM}$ that satisfies \eqref{eqn:biased-id2} by $\V{\rho}^{(l)}$.
Because $m\in \Set M$, it is clear that $l\geq 1$.

Subsequently, we consider the state
\begin{align}
\V{\rho}_{\mathrm{all}} = \V{\rho}^{(l)} \otimes (\otimes^{N-M} \V{\rho}_m).\label{eqn:rho-N+1}
\end{align}
With the state defined in \eqref{eqn:rho-N+1}, the sampled set $\SM$ has at least $l$ number of qubit pairs in the state $\V{\rho}_m$.
When the number of sampled qubit pairs in state $\V{\rho}_m$ is equal to $l$, the joint state of the sampled qubit pairs is equal to $\V{\rho}^{(l)}$.
Thus, according to \eqref{eqn:unbiased_id_v2} and \eqref{eqn:biased-id2}, there must exist a sample set $\tSM$
such that 
\begin{align}
\begin{split}
\mathbb{E}_{R}\big[\check{F}-\bar{f}_{\tSM}\big|\V{\rho}_{\tSM} \big]&\neq 0,\quad{\mbox{and}}\\
\sum_{n \in \tSM}  \mathds{1}(\V{\rho}_n = \V{\rho}_m) &=\tilde{l} \geq l+1.
\end{split}\label{eqn:induction_result}
\end{align}

Denote the state $\V{\rho}_{\tSM}$ by $\V{\rho}^{(\tilde{l})}$.
Next, consider the state
\begin{align}
\V{\rho}_{\mathrm{all}} = \V{\rho}^{(\tilde{l})} \otimes (\otimes^{N-M} \V{\rho}_m)\label{eqn:rho-N+1}
\end{align}
and repeat the analysis above.  
Because $\tilde{l} \geq l+1$, by repeating this analysis for at most $M-l$ times, it can be obtained that state 
$\V{\rho}^{(M)}=\otimes^M \V{\rho}_m$ does not satisfy the unbiased constraint, i.e., 
\begin{align}
\mathbb{E}_{R}\big[\check{F}-\bar{f}_{\tSM}\big|\V{\rho}^{(M)} \big]&\neq 0.\label{eqn:biased_id2_v2}
\end{align}
However, according to \eqref{eqn:biased_id2_v2}, \eqref{eqn:unbiased_id_v2} does not hold for the state $\V{\rho}_{\mathrm{all}}=\otimes^N \V{\rho}_m$.
This contradiction shows that any estimation protocol $\{\Set O, \Set D\}$ that satisfies \eqref{eqn:unbiased_id} also satisfies \eqref{eqn:unbiased_id2}.

Using the strengthened unbiased constraint \eqref{eqn:unbiased_id2}, the estimation error in $\Ps$-\ref{prob:id}
can be decomposed as follows:
\begin{subequations}
\begin{align}
&\sum_{\SM}\mathbb{E}_{R}\big[(\check{F}-\bar{f})^2\big]\nonumber \\
&=
\sum_{\SM}\mathbb{E}_{R}\big[(\check{F}-\bar{f}_{\SM} +\bar{f}_{\SM} -\bar{f})^2\big]\label{eqn:decompose-a}\\
&=
\sum_{\SM}\mathbb{E}_{R}\big[(\check{F}-\bar{f}_{\SM})^2\big]
+2 \sum_{\SM}\mathbb{E}_{R}\big[(\check{F}-\bar{f}_{\SM})(\bar{f}_{\SM} -\bar{f})\big]\nonumber\\
 &\hspace{4mm}+\sum_{\SM}\mathbb{E}_{R}\big[(\bar{f}_{\SM} -\bar{f})^2\big]\label{eqn:decompose-b}\\
 &=
\sum_{\SM}\mathbb{E}_{R}\big[(\check{F}-\bar{f}_{\SM})^2\big]
+2 \sum_{\SM}\mathbb{E}_{R}\big[\check{F}-\bar{f}_{\SM}\big| \V{\rho}_{\SM}\big](\bar{f}_{\SM} -\bar{f})\nonumber\\
 &\hspace{4mm}+\sum_{\SM}(\bar{f}_{\SM} -\bar{f})^2\label{eqn:decompose-c}\\
 &=\sum_{\SM}\mathbb{E}_{R}\big[(\check{F}-\bar{f}_{\SM})^2\big]+\sum_{\SM}(\bar{f}_{\SM} -\bar{f})^2,\label{eqn:decompose-d}
\end{align}\label{eqn:decompose}
\end{subequations} 
where \eqref{eqn:decompose-c} holds because in cases with independent noise, 
the average fidelity of sampled qubit pairs, $\bar{f}_{\SM}$, and that of unsampled qubit pairs, $\bar{f}$, are deterministic given each sample set $\SM$;
\eqref{eqn:decompose-d} is true due to \eqref{eqn:unbiased_id2}.

\eqref{eqn:decompose} decomposes the estimation error into two parts, i.e., the estimation error \ac{w.r.t.} the average fidelity of the sampled qubit pairs
\begin{align}
\sum_{\SM}\mathbb{E}_{R}\big[(\check{F}-\bar{f}_{\SM} )^2\big],\label{eqn:error-1}
\end{align}
and the deviation between the average fidelity of sampled and unsampled qubit pairs
\begin{align}
\sum_{\SM}(\bar{f}_{\SM} -\bar{f})^2.\label{eqn:error-2}
\end{align}
The value of \eqref{eqn:error-2} is determined by the fidelity composition of all qubit pairs, i.e., 
\begin{align}
\V{f}_{\mathrm{all}} =\{f_n,n\in \Set N\},
\end{align}
and not affected by the estimation protocol $\{\Set O$, $\Set D\}$.
Therefore, for every fidelity composition $\V{f}_{\mathrm{all}}$, minimizing the estimation error
\begin{align}
 \underset{\Set O, \Set D}{\mathrm{minimize}}
 \max_{\V{\rho}_{\mathrm{all}}\in{\Set S}_{\mathrm{id}}(\V{f}_{\mathrm{all}})}\sum_{\SM}\mathbb{E}_{R}\big[(\check{F}-\bar{f})^2\big]
 \label{eqn:minvar_id_f}
 \end{align}
 is equivalent to 
\begin{align}
 \underset{\Set O, \Set D}{\mathrm{minimize}}
 \max_{\V{\rho}_{\mathrm{all}}\in{\Set S}_{\mathrm{id}}(\V{f}_{\mathrm{all}})}\sum_{\SM}\mathbb{E}_{R}\big[(\check{F}-\bar{f}_{\SM})^2\big].
 \end{align} 
Hence, if a protocol $\{\Set O^*$, $\Set D^*\}$ is optimal in $\Ps$-\ref{prob:id2} for all ${\Set S}_{\mathrm{id}}(\V{f}_{\mathrm{all}})\subset{\Set S}_{\mathrm{id}}$, 
it is also the optimal solution to \eqref{eqn:minvar_id_f} for all ${\Set S}_{\mathrm{id}}(\V{f}_{\mathrm{all}})\subset{\Set S}_{\mathrm{id}}$.
Furthermore, because
\begin{align}
{\Set S}_{\mathrm{id}} = \bigcup_{\V{f}_{\mathrm{all}}}{\Set S}_{\mathrm{id}}(\V{f}_{\mathrm{all}}),
\end{align}
protocol $\{\Set O^*$, $\Set D^*\}$ minimizes
\begin{align}
 \max_{\V{\rho}_{\mathrm{all}}\in{\Set S}_{\mathrm{id}}}\sum_{\SM}\mathbb{E}_{R}\big[(\check{F}-\bar{f}\,)^2\big].
 \end{align}
Therefore, $\{\Set O^*$, $\Set D^*\}$ is also an optimal solution to $\Ps$-\ref{prob:id}.
This completes the proof of \thref{lem:id-sample}. 
\end{proof}

The following theorem summarizes the results of this section.
\begin{ThmA}[Generality of optimality with independent noise] \thlabel{thm:gen-opt-id}
If a measurement operation $\Set O^*$ is optimal in $\Ps$-\ref{prob:id2} for all
${\Set S}_{\mathrm{id}}(\V{f}_{\mathrm{all}})\subset{\Set S}_{\mathrm{id}}$,
then a composite measurement operation $\hat{\Set O}^* = \Set O^* \circ \Set T$ is optimal in $\Ps$-\ref{prob:1},
where the operation $\Set T$ is defined as in \thref{def:T}.
\end{ThmA}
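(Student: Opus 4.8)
The plan is to obtain \thref{thm:gen-opt-id} as a direct consequence of the three equivalence lemmas proved above, chained in the order $\Ps$-\ref{prob:id2} $\to$ $\Ps$-\ref{prob:id} $\to$ $\Ps$-\ref{prob:sp} $\to$ $\Ps$-\ref{prob:1}. No genuinely new argument is needed; the theorem is the composition of \thref{lem:id-sample}, \thref{lem:id2sp}, and \thref{lem:sp2arb}.

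Concretely, I would proceed as follows. First, invoke \thref{lem:id-sample}: by hypothesis $\Set O^*$ is optimal in $\Ps$-\ref{prob:id2} for every fidelity composition, i.e.\ for all ${\Set S}_{\mathrm{id}}(\V{f}_{\mathrm{all}})\subset{\Set S}_{\mathrm{id}}$, so \thref{lem:id-sample} yields that $\Set O^*$ is optimal in $\Ps$-\ref{prob:id}. Next, apply \thref{lem:id2sp}, which upgrades optimality under independent noise to optimality under classically correlated (separable) noise, giving that $\Set O^*$ is optimal in $\Ps$-\ref{prob:sp}. Finally, apply \thref{lem:sp2arb}: since $\Set O^*$ is optimal in $\Ps$-\ref{prob:sp}, precomposing with the probabilistic bilateral rotation $\Set T$ of \thref{def:T} produces $\hat{\Set O}^* = \Set O^* \circ \Set T$, which is optimal in $\Ps$-\ref{prob:1}.

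The only points requiring care are bookkeeping ones rather than substantive obstacles. I would fix notation so that throughout the chain $\Set D^*$ denotes the estimator used together with $\Set O^*$ to attain the minimum worst-case squared error $e(\cdot,\Set O^*,\Set D^*)$ appearing in \eqref{eqn:fSOD}, so that ``optimality of a measurement operation'' is unambiguous and the same pair $\{\Set O^*,\Set D^*\}$ is propagated at each step. I would also note that the unbiasedness constraints are handled inside the lemmas themselves — in particular the strengthening from \eqref{eqn:unbiased_id} to \eqref{eqn:unbiased_id2} carried out in the proof of \thref{lem:id-sample}, and the invariance of $\bar{f}$ and of the measurement outcome under $\Set T$ on the unsampled pairs used in the proof of \thref{lem:sp2arb} — so that nothing about feasibility needs to be re-checked here. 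The hard part of the whole development was establishing these three lemmas; the theorem is simply their transitive composition, and writing it out amounts to stating the chain of implications explicitly.
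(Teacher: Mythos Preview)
Your proposal is correct and matches the paper's own proof essentially verbatim: the paper states that the theorem is a direct consequence of \thref{lem:sp2arb}, \thref{lem:id2sp}, and \thref{lem:id-sample}, which is exactly the chain $\Ps$-\ref{prob:id2} $\to$ $\Ps$-\ref{prob:id} $\to$ $\Ps$-\ref{prob:sp} $\to$ $\Ps$-\ref{prob:1} you describe. Your additional bookkeeping remarks about $\Set D^*$ and the unbiasedness constraints are accurate and harmless elaborations.
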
 
\begin{proof}
The theorem is a direct consequence of Lemmas~\ref{lem:sp2arb}, \ref{lem:id2sp}, and \ref{lem:id-sample}.
\end{proof}

\section{Operation Construction}
\label{sec:construct}
In this section, we will first derive a lower bound of the estimation error,
then based on the conditions for achieveing this lower bound, construct an optimal fidelity estimation protocol.
\subsection{Lower bound of the estimation error} 
\label{subsec:ErrorLB}

We will first further simplify $\Ps$-\ref{prob:id2}. (\thref{lem:id-w}), 
then characterize the limit of separable operators (\thref{lem:sep-ent}) 
and finally determine the lowest feasible bound of the estimation error (\thref{thm:min}).  

The state of a qubit pair $\V{\rho}_{n}\in \Set H^{4\times 4}$ has several parameters other than fidelity.
This property complicates the Fisher information analysis of $\Ps$-\ref{prob:id2}.
To overcome this challenge, we further simplify $\Ps$-\ref{prob:id2} to an equivalent one, $\Ps$-\ref{prob:n}, in which the fidelity composition $\V{f}_{\mathrm{all}}$ is sufficient to parameterize the state of the qubit pairs.
\begin{ProblemA}[Error minimization for Werner states] \label{prob:n}
Given that $\V{\rho}_{\mathrm{all}} = \otimes^{n\in \Set N }\V{\sigma}_n$,
\begin{align}
\underset{\Set O, \Set D}{\mathrm{minimize}}\quad&\sum_{\SM}\mathbb{E}_{R}\big[(\check{F}-\bar{f}_{\SM})^2 \big] \label{eqn:minvar_2}\\
\text{subject to}\quad
&\mathbb{E}_{R}\big[\check{F}-\bar{f}_{\SM}\big| \SM\big]=0,\; \forall \SM \subset \Set N, \label{eqn:unbiased_2}
\end{align}
where $\V{\sigma}_n$ is the Werner state with fidelity $f_n$, i.e., 
\begin{align}
\V{\sigma}_n = f_n|\Psi^-\rangle\langle\Psi^-| + \frac{1-f_n}{3}(\mathbb{I}_4-|\Psi^-\rangle\langle\Psi^-|).\label{eqn:sigma}
\end{align}
\end{ProblemA}

\begin{LemA}[Equivalence of $\Ps$-\ref{prob:id2} and $\Ps$-\ref{prob:n}]
\thlabel{lem:id-w}
With the optimal estimation protocol $\{\Set O, \Set D\}$, the objective functions of $\Ps$-\ref{prob:id2} and $\Ps$-\ref{prob:n} have the same value.
\end{LemA}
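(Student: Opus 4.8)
The plan is to establish equality of the two optimal values by proving each inequality separately, following the template of \thref{lem:sp2arb}. One direction is immediate: for a fixed fidelity composition $\V{f}_{\mathrm{all}}$, the product-of-Werner states $\Set S_{\mathrm w}(\V{f}_{\mathrm{all}})$ form a subset of $\Set S_{\mathrm{id}}(\V{f}_{\mathrm{all}})$, and any protocol that is conditionally unbiased on all of $\Set S_{\mathrm{id}}(\V{f}_{\mathrm{all}})$ is in particular unbiased on the Werner subset and hence feasible for $\Ps$-\ref{prob:n}; since a worst case over a larger state set cannot be smaller, the optimal value of $\Ps$-\ref{prob:id2} is at least that of $\Ps$-\ref{prob:n}.

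For the reverse inequality I would use the random bilateral rotation $\Set B$ of \cite{BenBraPopSchSmoWoo:J96}: on a given qubit pair both nodes apply the same rotation, drawn from the (Haar-averaged) $SU(2)$ rotation group via shared classical randomness, to their respective qubit. This is a separable operation that leaves the singlet fraction --- i.e.\ the fidelity with respect to $|\Psi^-\rangle$ --- invariant and sends an arbitrary qubit-pair state to the Werner state \eqref{eqn:sigma} of the same fidelity. Applying $\Set B$ independently to each sampled pair maps $\V{\rho}_{\SM}$ to the fixed Werner product $\otimes_{n\in\SM}\V{\sigma}_n$, and in the independent-noise setting both the distribution of the measurement outcome and the estimand $\bar f_{\SM}$ depend only on $\V{\rho}_{\SM}$ (one may also twirl the unsampled pairs, exactly as in \thref{lem:sp2arb}, without changing anything). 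Hence, taking a protocol $\{\Set O^*,\Set D^*\}$ optimal for $\Ps$-\ref{prob:n}, the composite $\{\Set O^*\circ\Set B,\,\Set D^*\}$ is feasible for $\Ps$-\ref{prob:id2} --- the conditional constraint \eqref{eqn:unbiased_id2} holds because after the twirl $\V{\rho}_{\SM}$ is the fixed Werner product --- and attains the same worst-case error as $\{\Set O^*,\Set D^*\}$ in $\Ps$-\ref{prob:n}. This yields the reverse inequality, so the two optimal values are equal, and the same construction shows $\Set O^*\circ\Set B$ is optimal in $\Ps$-\ref{prob:id2}. I would present this chain through the worst-case error function $e(\Set S,\Set O,\Set D)$ of \eqref{eqn:fSOD}, as in \eqref{eqn:arb-le-sp}.

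Concretely the steps are: (i) recall/verify the twirl identity, that $\Set B$ preserves fidelity and outputs the isotropic state \eqref{eqn:sigma}; (ii) record the locality and commutation properties of the twirl on the unsampled pairs; (iii) assemble the two-sided inequality. The main obstacle is step (i): one has to take the average over the \emph{full} bilateral rotation group rather than the finite Pauli twirl $\{\mathbb{I}_4,\ \V{\sigma}_x\!\otimes\!\V{\sigma}_x,\ \V{\sigma}_y\!\otimes\!\V{\sigma}_y,\ \V{\sigma}_z\!\otimes\!\V{\sigma}_z\}$ of \thref{def:T} --- the latter only symmetrizes to Bell-diagonal states, whereas $\Ps$-\ref{prob:n} demands the Werner form \eqref{eqn:sigma} --- and verify that this continuous average is still realizable by separable operations with shared classical randomness, so that it is admissible in $\Ps$-\ref{prob:id2}. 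A secondary point is matching the two differently stated unbiasedness constraints (conditioned on $\V{\rho}_{\SM}$ in \eqref{eqn:unbiased_id2} versus on $\SM$ in \eqref{eqn:unbiased_2}), which coincide once the sampled state has been twirled to the fixed Werner product.
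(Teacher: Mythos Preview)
Your proposal is correct and matches the paper's proof essentially line for line: the paper also precomposes the optimal protocol for $\Ps$-\ref{prob:n} with the bilateral twirl $\Set B$ of \cite{BenBraPopSchSmoWoo:J96}, writes the resulting chain through $e(\Set S,\Set O,\Set D)$ in the style of \eqref{eqn:arb-le-sp}, and closes with the inclusion $\Set S_{\mathrm w}(\V{f}_{\mathrm{all}})\subset\Set S_{\mathrm{id}}(\V{f}_{\mathrm{all}})$. Your additional remarks --- that the full $SU(2)$ twirl rather than the finite Pauli twirl of \thref{def:T} is required, and that the two differently stated unbiasedness constraints must be reconciled after twirling --- are points the paper leaves implicit.
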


\begin{proof}

Recall the function of the worst-case squared estimation error $e(\Set S, \Set O, \Set D)$ defined in \eqref{eqn:fSOD}, and 
denote $\Set{S}_{\mathrm{w}}(\V{f}_{\mathrm{all}}) =\{\otimes^{n\in \Set N }\V{\sigma}_n\}$.
The objective functions of $\Ps$-\ref{prob:id2} and $\Ps$-\ref{prob:n} can be rewritten as
\begin{align}
\begin{split}
&\underset{\Set O, \Set D}{\mathrm{minimize}}\, e({\Set S}_{\mathrm{id}}(\V{f}_{\mathrm{all}}), \Set O, \Set D), \mbox{ and }\\
&\underset{\Set O, \Set D}{\mathrm{minimize}}\, e(\Set S_{\mathrm{w}}(\V{f}_{\mathrm{all}}), \Set O, \Set D),
\end{split}
\end{align}
respectively.

Denote $\{\Set O^*, \Set D^*\}$ as the optimal solution to $\Ps$-\ref{prob:n},
and denote the random bilateral rotation operation proposed in \cite{BenBraPopSchSmoWoo:J96} by $\Set B$.
Define the composition operation $\hat{\Set O}^* = \Set O^* \circ (\otimes^M \Set B)$.
In this case,  we have that
\begin{subequations}
\begin{align}
\underset{\Set O, \Set D}{\mathrm{minimize}}\, &e({\Set S}_{\mathrm{id}}(\V{f}_{\mathrm{all}}), \Set O, \Set D)\nonumber\\
\leq &e({\Set S}_{\mathrm{id}}(\V{f}_{\mathrm{all}}), \hat{\Set O}^*, \Set D^*) \label{eqn:id-le-w-a}\\
 \leq &e({\Set S}_{\mathrm{w}}(\V{f}_{\mathrm{all}}), {\Set O}^*, \Set D^*) \label{eqn:id-le-w-b}\\
= & \underset{\Set O, \Set D}{\mathrm{minimize}}\,e(\Set {\Set S}_{\mathrm{w}}(\V{f}_{\mathrm{all}}), \Set O, \Set D), 
\label{eqn:id-le-w-c}
\end{align} \label{eqn:id-le-w}
\end{subequations}
where \eqref{eqn:id-le-w-b} holds because the operation $\Set B$ transforms a general qubit pair state $\V{\rho}_n$ to a Werner state $\V{\sigma}_n$ with the same fidelity,
and \eqref{eqn:id-le-w-c} holds because $\{\Set O^*, \Set D^*\}$ is the optimal solution to $\Ps$-\ref{prob:sp}.

Moreover, because $\Set{S}_{\mathrm{w}}(\V{f}_{\mathrm{all}})\subset \Set{S}_{\mathrm{id}}(\V{f}_{\mathrm{all}})$,
\begin{align}
\underset{\Set O, \Set D}{\mathrm{minimize}}\,e(\Set S_{\mathrm{w}}(\V{f}_{\mathrm{all}}), \Set O, \Set D) 
\leq \underset{\Set O, \Set D}{\mathrm{minimize}}\, e(\Set S_{\mathrm{id}}(\V{f}_{\mathrm{all}}), \Set O, \Set D).
\label{eqn:w-le-id}
\end{align}

From \eqref{eqn:id-le-w} and \eqref{eqn:w-le-id},
\begin{align}
\underset{\Set O, \Set D}{\mathrm{minimize}}\,e(\Set S_{\mathrm{w}}(\V{f}_{\mathrm{all}}), \Set O, \Set D) 
= \underset{\Set O, \Set D}{\mathrm{minimize}}\, e(\Set S_{\mathrm{id}}(\V{f}_{\mathrm{all}}), \Set O, \Set D),
\end{align}
which proves \thref{lem:id-w}.
\end{proof}

The next lemma characterizes the limit of separable operators when measuring maximally entangled qubit pairs.
\begin{LemA}[Limit of  separable operators]\thlabel{lem:sep-ent}
$|\Phi\rangle$ is a maximally entangled state of a qubit pair. 
$
\M{M} = \M{M}^{\mathrm{(A)}}\otimes \M{M}^{\mathrm{(B)}}
$
is a separable operator, where $\M{M}^{\mathrm{(X)}}\succcurlyeq \M{0}$, $\mathrm{X}\in\{\mathrm{A,B}\}$. 
In this case,
\begin{align}
0\leq \frac{\mathrm{Tr}\big(|\Phi\rangle\langle\Phi|\M{M}^{\mathrm{(AB)}}\big)}
{\mathrm{Tr}\big((\mathbb{I}_4-|\Phi\rangle\langle\Phi|)\M{M}^{\mathrm{(AB)}} \big)}\leq 1.\label{eqn:maxcor}
\end{align}
\end{LemA}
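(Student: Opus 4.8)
The plan is to prove the inequality \eqref{eqn:maxcor} by reducing it to the known fact, already invoked just before \eqref{eqn:maxcor-m}, that the fidelity of any separable two-qubit state with respect to a maximally entangled state is at most $\tfrac12$. First I would observe that both the numerator and the denominator in \eqref{eqn:maxcor} are nonnegative: $\M{M}^{\mathrm{(AB)}} = \M{M}^{\mathrm{(A)}}\otimes\M{M}^{\mathrm{(B)}}\succcurlyeq\M{0}$ is a positive semidefinite (separable) operator, so $\mathrm{Tr}(\M{P}\M{M}^{\mathrm{(AB)}})\geq 0$ for any projector $\M{P}$; in particular this gives the left inequality $0\leq\mathrm{Tr}(|\Phi\rangle\langle\Phi|\M{M}^{\mathrm{(AB)}})/\mathrm{Tr}((\mathbb{I}_4-|\Phi\rangle\langle\Phi|)\M{M}^{\mathrm{(AB)}})$ immediately (with the harmless caveat that the denominator is strictly positive — if it vanished, $\M{M}^{\mathrm{(AB)}}$ would be proportional to $|\Phi\rangle\langle\Phi|$, contradicting that it is a product operator of rank at most $4$ which is entangled only if it equals zero, so one can either exclude this degenerate case or note the ratio is then interpreted as $0$).

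The substance is the right inequality. I would set $t=\mathrm{Tr}(\M{M}^{\mathrm{(AB)}})$, which is strictly positive (else the claim is vacuous), and define the normalized state $\V{\tau}=\M{M}^{\mathrm{(AB)}}/t$. Because $\M{M}^{\mathrm{(AB)}}$ is a tensor product of two positive semidefinite single-qubit operators, $\V{\tau}$ is a product state $\V{\tau}^{\mathrm{(A)}}\otimes\V{\tau}^{\mathrm{(B)}}$, hence separable. Then
\begin{align}
\frac{\mathrm{Tr}\big(|\Phi\rangle\langle\Phi|\M{M}^{\mathrm{(AB)}}\big)}{\mathrm{Tr}(\M{M}^{\mathrm{(AB)}})}
= \langle\Phi|\V{\tau}|\Phi\rangle \leq \frac12,
\end{align}
by the cited bound on the overlap of a separable two-qubit state with a maximally entangled state \cite{HorHorHorHor:J09}. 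Writing $a=\mathrm{Tr}(|\Phi\rangle\langle\Phi|\M{M}^{\mathrm{(AB)}})$, so that $\mathrm{Tr}((\mathbb{I}_4-|\Phi\rangle\langle\Phi|)\M{M}^{\mathrm{(AB)}})=t-a$, the bound $a/t\leq\tfrac12$ rearranges to $a\leq t-a$, i.e.\ $a/(t-a)\leq 1$, which is exactly the right inequality in \eqref{eqn:maxcor}.

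I would expect the only real subtlety — and hence the main thing to be careful about — to be the reduction to the normalized separable state and the citation of the $\tfrac12$ overlap bound: one must make sure the operator is genuinely a \emph{product} (not merely separable) so that $\V{\tau}$ is a product state, and one should state explicitly why the denominator may be assumed positive so that the division is legitimate. Everything else is elementary positivity and rearrangement. An alternative, self-contained route that avoids citing \cite{HorHorHorHor:J09} would be to diagonalize $\M{M}^{\mathrm{(A)}}$ and $\M{M}^{\mathrm{(B)}}$ and compute $\langle\Phi|(\V{\tau}^{\mathrm{(A)}}\otimes\V{\tau}^{\mathrm{(B)}})|\Phi\rangle$ directly — for $|\Phi\rangle$ maximally entangled, $\langle\Phi|(\M{P}\otimes\M{Q})|\Phi\rangle=\tfrac14\mathrm{Tr}(\M{P}\,\M{Q}^{\mathsf{T}})$ in a suitable basis, and a Cauchy–Schwarz/operator-norm estimate together with $\mathrm{Tr}\V{\tau}^{\mathrm{(A)}}=\mathrm{Tr}\V{\tau}^{\mathrm{(B)}}=1$ and $\|\V{\tau}^{\mathrm{(X)}}\|\leq 1$ yields the $\tfrac12$ bound — but I would keep the citation-based argument as the primary proof for brevity.
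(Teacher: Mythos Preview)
Your proposal is correct and follows essentially the same route as the paper: normalize $\M{M}^{\mathrm{(AB)}}$ to a separable (in fact product) density matrix, invoke the bound $\langle\Phi|\V{\tau}|\Phi\rangle\le\tfrac12$ for separable states, and rearrange. The paper's proof is exactly this, only more terse (it does not pause over the degenerate denominator-zero case or sketch the alternative direct computation); one small slip in your aside is that for a two-qubit maximally entangled state the identity reads $\langle\Phi|(\M{P}\otimes\M{Q})|\Phi\rangle=\tfrac12\,\mathrm{Tr}(\M{P}\,\M{Q}^{\mathsf T})$, not $\tfrac14$.
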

\proof
We define 
\begin{align}
\V{\rho} = \frac{\M{M}^{\mathrm{(AB)}}}{\mathrm{Tr}\big(\M{M}^{\mathrm{(AB)}} \big)}.\label{eqn:rhoMAB}
\end{align} 
Since $\M{M}^{\mathrm{(X)}}\succcurlyeq \M{0}$, $\mathrm{X}\in\{\mathrm{A,B}\}$, $\V{\rho}$ is the density matrix of a separable state of a qubit pair.
For this state, its fidelity \ac{w.r.t.} every maximally entangled state lies in the interval $[0,\frac{1}{2}]$ \cite{HorHor:96,HorHorHor:97}, i.e.,
\begin{align}
\mathrm{Tr}(|\Phi\rangle\langle\Phi|\V{\rho}) \in[0, \frac{1}{2}].\label{eqn:TrAB}
\end{align}
Consequently,
\begin{align}
\frac{\mathrm{Tr}(|\Phi\rangle\langle\Phi|\V{\rho})}{\mathrm{Tr}\big((\mathbb{I}_4-|\Phi\rangle\langle\Phi|)\V{\rho}\big)}=\frac{\mathrm{Tr}(|\Phi\rangle\langle\Phi|\V{\rho})}{1-\mathrm{Tr}(|\Phi\rangle\langle\Phi|\V{\rho})} \in [0,1].\label{eqn:TrAB2}
\end{align}
According to \eqref{eqn:rhoMAB} and \eqref{eqn:TrAB2}, \eqref{eqn:maxcor} is obtained.
This completes the proof of \thref{lem:sep-ent}.
\endproof

\begin{Remark}[The role of inequality \eqref{eqn:maxcor}]\label{remark:local}
To measure the fidelity \ac{w.r.t.} a state $|\Phi\rangle\langle\Phi|$, it is most efficient to use $|\Phi\rangle\langle\Phi|$ and $\mathbb{I}_4-|\Phi\rangle\langle\Phi|$ as measurement operators to ensure that the distribution of the measurement outcome is maximally sensitive to the changes in fidelity.

Since the target state $|\Psi^-\rangle$ is entangled, separable operations cannot realize the measurement operators mentioned above.
To increase their efficiency, separable measurement operators should be designed to best mimic the ideal operator $|\Psi^-\rangle\langle\Psi^-|$.
The inequality \eqref{eqn:maxcor} characterizes the best alignment between separable operators and the target state. 
Consequently, \eqref{eqn:maxcor} is the critical constraint that upper bounds the efficiency of separable operators in fidelity estimation.
~\QEDB
\end{Remark}

Based on the above results, the following lemma characterizes a lower bound for the estimation error in $\Ps$-\ref{prob:id2}.
\begin{LemA}[Lower bound for the estimation error]\thlabel{thm:min}
The mean squared estimation error, i.e., the objective function of $\Ps$-\ref{prob:id2} divided by $N \choose M$, is no less than 
\begin{align}
\sum_{n\in\Set N}\frac{(2f_n+1)(1-f_n)}{2MN}.
\label{eqn:errorbound}
\end{align}
\end{LemA}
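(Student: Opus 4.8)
The plan is to reduce to Werner states via \thref{lem:id-w}, then run a Cram\'er--Rao argument whose per-pair sensitivity is throttled by the separability constraint \eqref{eqn:maxcor}, and finally average over the sample sets. By \thref{lem:id-w} the optimal value of $\Ps$-\ref{prob:id2} coincides with that of $\Ps$-\ref{prob:n}, so it suffices to lower bound the error for the product state $\V{\rho}_{\mathrm{all}}=\otimes_{n\in\Set N}\V{\sigma}_n$, with $\V{\sigma}_n$ the Werner state of fidelity $f_n$ from \eqref{eqn:sigma}. The benefit of this reduction is that the outcome law $\mathrm{Pr}(r\mid\SM)$ then depends smoothly on the finite parameter vector $(f_n)_{n\in\SM}$, so the classical Fisher information machinery becomes available.

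Fix a sample set $\SM$ and let $S_n=\partial_{f_n}\log\mathrm{Pr}(R\mid\SM)$ be the score, so $\mathbb E_R[S_n]=0$. Differentiating the unbiasedness constraint \eqref{eqn:unbiased_2} in each $f_n$ gives $\mathbb E_R[(\check F-\bar f_{\SM})S_n]=\partial_{f_n}\bar f_{\SM}=\tfrac1M$ for every $n\in\SM$. Writing $I_{\SM}=\big(\mathbb E_R[S_mS_n]\big)_{m,n\in\SM}$ for the classical Fisher information matrix, Cauchy--Schwarz applied to $\mathbb E_R\big[(\check F-\bar f_{\SM})\sum_n\lambda_nS_n\big]$ and optimised over $\boldsymbol{\lambda}$ yields
\begin{align}
\mathbb E_R\big[(\check F-\bar f_{\SM})^2\big]\ \geq\ \frac1{M^2}\sum_{m,n\in\SM}\big(I_{\SM}^{-1}\big)_{mn}.
\end{align}

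To bound $I_{\SM}$, I would decompose each separable POVM element into rank-one local pieces. Because the joint state is a product, pair $n$'s contribution to any score $S_n(r)$ is a ratio of the single-pair form $\mathrm{Tr}(|\Psi^-\rangle\langle\Psi^-|\,\cdot\,)/\mathrm{Tr}(\,\cdot\,)$ evaluated on the local operator acting on pair $n$; \thref{lem:sep-ent}, i.e. \eqref{eqn:maxcor}, forces this ratio into $[0,\tfrac12]$, which pins $S_n(r)\in\big[-\tfrac1{1-f_n},\,\tfrac2{2f_n+1}\big]$ for every outcome $r$. Together with $\mathbb E_R[S_n]=0$ and the elementary fact that a mean-zero variable supported on $[a,b]$ has second moment at most $-ab$, this gives $(I_{\SM})_{nn}\le\frac{2}{(2f_n+1)(1-f_n)}$. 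Invoking in addition that the exact quantum Fisher matrix of a product state is block-diagonal (because $\mathrm{Tr}(\partial_{f_n}\V{\sigma}_n)=0$) to tame the off-diagonal entries, one gets $I_{\SM}\preccurlyeq\bigoplus_{n\in\SM}\frac{2}{(2f_n+1)(1-f_n)}$ and hence $\sum_{m,n\in\SM}(I_{\SM}^{-1})_{mn}\ge\sum_{n\in\SM}\frac{(2f_n+1)(1-f_n)}{2}$. Summing the resulting estimate $\mathbb E_R[(\check F-\bar f_{\SM})^2]\ge\frac1{M^2}\sum_{n\in\SM}\frac{(2f_n+1)(1-f_n)}{2}$ over all $\binom NM$ sample sets, and using that each index $n$ lies in $\binom{N-1}{M-1}$ of them with $\binom NM^{-1}\binom{N-1}{M-1}=\tfrac MN$, gives
\begin{align}
\binom NM^{-1}\!\sum_{\SM}\mathbb E_R\big[(\check F-\bar f_{\SM})^2\big]\ \geq\ \frac1{MN}\sum_{n\in\Set N}\frac{(2f_n+1)(1-f_n)}{2},
\end{align}
which is \eqref{eqn:errorbound}.

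The main obstacle is the bound on $I_{\SM}$: the separability constraint \eqref{eqn:maxcor} directly limits only the diagonal of $I_{\SM}$, so for a measurement that is separable but classically correlated across the sampled pairs one still has to exclude off-diagonal Fisher information that would shrink $\sum_{m,n}(I_{\SM}^{-1})_{mn}$ and thereby make estimation easier. I expect this to follow from the rank-one local decomposition of the POVM together with the product form of the state, which should make the per-pair scores decouple exactly as for a genuine product-across-pairs measurement; pinning down this decoupling — and with it the equality conditions of \eqref{eqn:maxcor} that single out the measurement operators of Section~\ref{sec:construction} — is the delicate step.
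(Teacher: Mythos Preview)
Your reduction to Werner states via \thref{lem:id-w}, the Cram\'er--Rao setup with gradient $\tfrac1M\V{1}_M$, and the final averaging over sample sets all match the paper. Your derivation of the diagonal bound $(I_{\SM})_{nn}\le\tfrac{2}{(2f_n+1)(1-f_n)}$ is genuinely different and arguably cleaner: you note that \eqref{eqn:maxcor} forces the score $S_n(r)$ pointwise into $\big[-\tfrac1{1-f_n},\tfrac2{2f_n+1}\big]$ and then apply the elementary bound $\mathbb E[X^2]\le -ab$ for a mean-zero variable supported on $[a,b]$. The paper instead writes $\mathrm{Pr}(r\mid\V{f}_{\SM})=f_na_{r,n}+\tfrac{1-f_n}3 b_{r,n}$ explicitly, extracts $0\le a_{r,n}\le b_{r,n}$ from \thref{lem:sep-ent}, and then bounds $I_{nn}=\sum_r\tfrac{(a_{r,n}-b_{r,n}/3)^2}{f_na_{r,n}+(1-f_n)b_{r,n}/3}$ via convexity in $a_{r,n}$ combined with the normalisations $\sum_ra_{r,n}=1$, $\sum_rb_{r,n}=3$. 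Both routes land on the same diagonal bound.

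The real gap is exactly where you flag it: passing from the diagonal bound to a lower bound on $\tfrac1{M^2}\V{1}_M\, I_{\SM}^{-1}\,\V{1}_M^{\,T}$. Neither of your proposed mechanisms closes it. Invoking the quantum Fisher matrix gives at best $I_{\SM}\preccurlyeq\mathrm{diag}\big(\tfrac1{f_n(1-f_n)}\big)$, since the single-pair QFI of the Werner state \eqref{eqn:sigma} is $\tfrac1{f_n(1-f_n)}$ and not $\tfrac{2}{(2f_n+1)(1-f_n)}$; inverting this yields a strictly weaker lower bound than \eqref{eqn:errorbound}. And for a separable-but-correlated POVM $\M{M}_r=\sum_k\otimes_n(\cdot)$ the per-pair scores do \emph{not} decouple: $\mathbb E_R[S_mS_n]$ can be nonzero even on a product state, so the classical Fisher matrix can carry genuine off-diagonal entries and your hoped-for reduction to a product-across-pairs measurement fails. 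The paper sidesteps all of this: it makes no attempt to suppress the off-diagonal of $I_{\SM}$, but instead invokes a matrix inequality from \cite{BobMayZak:J87} asserting $I_{\SM}^{-1}\succcurlyeq\mathrm{diag}\big((I_{\SM})_{nn}^{-1}\big)$ directly, and then contracts with $\tfrac1M\V{1}_M$ to reach \eqref{eqn:EtoF-2}. So the missing ingredient is not a decoupling argument but precisely that external inverse-Fisher inequality.
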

\proof

According to \thref{lem:id-w}, the minimum value of the objective function of $\Ps$-\ref{prob:id2} is equal to that of  $\Ps$-\ref{prob:n}.
Therefore, the following analysis lower bounds the objective function of $\Ps$-\ref{prob:n}.

In $\Ps$-\ref{prob:n}, $\V{\rho}_{\mathrm{all}} = \otimes^{n\in \Set N }\V{\sigma}_n$, i.e., the qubit pairs are in independent Werner states.
In this case, for each sampled set $\SM$, the fidelity composition of the sampled qubit pairs,
\begin{align}
\V{f}_{\SM} =\{f_n, n\in \SM\},
\end{align}
is the unknown fixed parameter that determines the distribution of the measurement outcome $R$.
According to the unbiased constraint \eqref{eqn:unbiased_id2} and the fact that $F_{\SM} = \frac{1}{M}\sum_{n\in \SM} f_n$,
\begin{align}
\frac{\partial \mathbb{E}_R\big[\check{F} \big]}{\partial \V{f}_{\SM}} = \frac{\V{1}_M}{M},\label{eqn:partial-E}
\end{align}
where $\V{1}_M$ denotes the $1\times M$ all-ones vector.
In this case, according to the Cram\'er-Rao bound \cite{Cra:B99,Rao:B94},
\begin{align}
\mathbb{E}_{R}\big[(\check{F}-\bar{f}_{\SM})^2\big] 
\geq \frac{\partial \mathbb{E}_R\big[\check{F} \big]}{\partial \V{f}_{\SM}}\big({\M I}_{R}(\V{f}_{\SM})\big)^{-1}\bigg(\frac{\partial \mathbb{E}_R\big[\check{F} \big]}{\partial \V{f}_{\SM}}\bigg)^T,\label{eqn:EtoF}
\end{align}
where ${\M I}_{R}(\V{f}_{\SM})$ is the $M\times M$ Fisher information matrix of the measurement outcome $R$ evaluated at point $\V{f}_{\SM}$.
The elements of ${\M I}_{R}(\V{f}_{\SM})$ are specified as
\begin{align}
I_{n,m}(\V{f}_{\SM}) = \mathbb{E}_{R}\bigg[\frac{\partial}{\partial f_n}\log {\Set P}(R;\V{f}_{\SM}) \Big(\frac{\partial}{\partial f_m}\log {\Set P}(R;\V{f}_{\SM})\Big)^T\bigg],
\label{eqn:Idk}
\end{align}
where $n,m \in \SM$, and ${\Set P}(r;\V{f}_{\SM})$ is the distribution of measurement outcome $r$ given the fidelity composition $\V{f}_{\SM}$.

According to \cite[Prop. 1]{BobMayZak:J87}, the reciprocal of the diagonal elements of a Fisher information matrix lower bounds the inverse of that matrix, i.e.
\begin{align}
\big({\M I}_{R}(\V{f}_{\SM})\big)^{-1} \succcurlyeq \mathrm{diag}\big(I^{-1}_{n,n}(\V{f}_{\SM}), n\in \SM \big),\label{eqn:Fisherinverse}
\end{align} 
where $\succcurlyeq$ denotes the matrix inequality, and $\mathrm{diag}(\cdot)$ denotes the diagonal matrix.
Substituting \eqref{eqn:partial-E} and \eqref{eqn:Fisherinverse} into \eqref{eqn:EtoF}, we get
\begin{align}
\mathbb{E}_{R}\big[(\check{F}-\bar{f}_{\SM})^2\big] 
\geq\sum_{n\in \SM} \frac{I^{-1}_{n,n}(\V{f}_{\SM})}{M^2}.\label{eqn:EtoF-2}
\end{align}

Given that the state of qubit pairs $\V{\rho}_{\mathrm{all}} = \otimes^{n\in \Set N }\V{\sigma}_n$,
and measurement operator 
\begin{align}
\M{M}_{r}= \sum_{k}\otimes^{n\in \Set M} (\M{M}^{\mathrm{(A)}}_{r,n,k}\otimes \M{M}^{\mathrm{(B)}}_{r,n,k}),
\end{align}
the distribution of the measurement outcome can be expressed as the following.
\begin{align}
&{\Set P}(r;\V{f}_{\SM})\nonumber\\
&= \mathrm{Tr}\big(\otimes^{n\in \Set M}\V{\sigma}_n\M{M}_{r}\big)\nonumber\\
&=\sum_{k}\mathrm{Tr}\big(\otimes^{n\in \Set M}(\V{\sigma}_n\M{M}^{\mathrm{(A)}}_{r,n,k}\otimes \M{M}^{\mathrm{(B)}}_{r,n,k})\big)\nonumber\\
&=\sum_{k}\prod_{n\in \Set M}\mathrm{Tr}\big(\V{\sigma}_n\M{M}^{\mathrm{(A)}}_{r,n,k}\otimes \M{M}^{\mathrm{(B)}}_{r,n,k}\big).
\label{eqn:partial-sep}
\end{align}

To characterize the effect of fidelity $f_n$ on the distribution of the measurement outcome, define
\begin{align}
\begin{split}
a_{r,n} &= \sum_{k}\mathrm{Tr}\big(|\Psi^-\rangle\langle\Psi^-|\M{M}^{\mathrm{(A)}}_{r,n,k}\otimes \M{M}^{\mathrm{(B)}}_{r,n,k}\big)\\
&\hspace{0.35mm}\prod_{m\in \Set M \backslash\{n\}}\mathrm{Tr}\big(\V{\sigma}_m\M{M}^{\mathrm{(A)}}_{r,m,k}\otimes \M{M}^{\mathrm{(B)}}_{r,m,k}\big),\\
b_{r,n} &= \sum_{k}\mathrm{Tr}\big((\mathbb{I}_4-|\Psi^-\rangle\langle\Psi^-|)\M{M}^{\mathrm{(A)}}_{r,n,k}\otimes \M{M}^{\mathrm{(B)}}_{r,n,k}\big)\\
&\hspace{0.35mm}\prod_{m\in \Set M \backslash\{n\}}\mathrm{Tr}\big(\V{\sigma}_m\M{M}^{\mathrm{(A)}}_{r,m,k}\otimes \M{M}^{\mathrm{(B)}}_{r,m,k}\big).
\end{split}\label{eqn:trdef}
\end{align}
It is evident that the value of $f_n$ has no effect on $a_{r,n}$ and $b_{r,n}$, i.e.
\begin{align}
\frac{\partial a_{r,n} }{ \partial f_n} = \frac{\partial b_{r,n} }{ \partial f_n} = 0, \; \forall r, n.\label{eqn:partial0}
\end{align}

According to \eqref{eqn:sigma} and \eqref{eqn:partial-sep}, the distribution of the measurement outcome can be rewritten as
\begin{align}
&\Set{P}(r|\V{f}_{\SM})\nonumber\\
&= f_n \sum_{k}\mathrm{Tr}\big(|\Psi^-\rangle\langle\Psi^-|\M{M}^{\mathrm{(A)}}_{r,n,k}\otimes \M{M}^{\mathrm{(B)}}_{r,n,k}\big)\nonumber\\
&\hspace{4mm}\prod_{m\in \Set M \backslash\{n\}}\mathrm{Tr}\big(\V{\sigma}_m\M{M}^{\mathrm{(A)}}_{r,m,k}\otimes \M{M}^{\mathrm{(B)}}_{r,m,k}\big)\nonumber\\
&\hspace{4mm}+ \frac{1-f_{n}}{3}\sum_{k}\mathrm{Tr}\big((\mathbb{I}_4-|\Psi^-\rangle\langle\Psi^-|)\M{M}^{\mathrm{(A)}}_{r,n,k}\otimes \M{M}^{\mathrm{(B)}}_{r,n,k}\big)\nonumber\\
&\hspace{4mm}\prod_{m\in \Set M \backslash\{n\}}\mathrm{Tr}\big(\V{\sigma}_m\M{M}^{\mathrm{(A)}}_{r,m,k}\otimes \M{M}^{\mathrm{(B)}}_{r,m,k}\big)\nonumber\\
&= f_{n} a_{r,n} + \frac{1-f_{n}}{3}b_{r,n}.
\label{eqn:pmd_rt}
\end{align}
According to \eqref{eqn:Idk}, \eqref{eqn:partial0}, and \eqref{eqn:pmd_rt},
\begin{align}
I_{n,n}(\V{f}_{\Set M}) &= \sum_{r}\Set{P}(r|f_{n})\Big(\frac{\partial}{\partial f_{n}}\log {\Set P}(r|f_{n}) \Big)^2\nonumber \\
&= \sum_{r}\frac{\big(a_{r,n}-\frac{1}{3}b_{r,n}\big)^2}{f_na_{r,n}+\frac{1-f_n}{3}b_{r,n}}.
\label{eqn:F_rt}
\end{align}

According to the property of \ac{POVM}, i.e. $\sum_{r}\M{M}_{r} =\mathbb{I}_{4^M}$,
it can be obtained that
\begin{align}
\sum_{r}a_{r,n}&= \mathrm{Tr}\Big(|\Psi^-\rangle\langle\Psi^-|\otimes\big(\otimes^{m\in \Set M\backslash\{n\}}\V{\sigma}_m\big)\sum_{r}\M{M}_{r}\Big)\nonumber\\
& = \mathrm{Tr}\Big(|\Psi^-\rangle\langle\Psi^-|\otimes\big(\otimes^{m\in \Set M\backslash\{n\}}\V{\sigma}_m\big)\Big)\nonumber\\
& = \mathrm{Tr}(|\Psi^-\rangle\langle\Psi^-|)\prod_{m\in \Set M \backslash\{n\}}\mathrm{Tr}(\V{\sigma}_m)\nonumber\\
& = 1.\label{eqn:suma}
\end{align}
Similarly,
\begin{align}
\sum_{r}b_{r,n}
& = \mathrm{Tr}(\mathbb{I}_4-|\Psi^-\rangle\langle\Psi^-|)\prod_{m\in \Set M \backslash\{n\}}\mathrm{Tr}(\V{\sigma}_m)\nonumber\\
& = 3.\label{eqn:sumb}
\end{align}

Applying \thref{lem:sep-ent}, we get
\begin{align}
0&\leq \mathrm{Tr}\big(|\Psi^-\rangle\langle\Psi^-|\M{M}^{\mathrm{(A)}}_{r,n,k}\otimes \M{M}^{\mathrm{(B)}}_{r,n,k}\big) \nonumber\\
& \leq \mathrm{Tr}\big((\mathbb{I}_4-|\Psi^-\rangle\langle\Psi^-|)\M{M}^{\mathrm{(A)}}_{r,n,k}\otimes \M{M}^{\mathrm{(B)}}_{r,n,k}\big),\; \forall r, n, k
\label{eqn:Tr-rnk}
\end{align}
Substituting \eqref{eqn:Tr-rnk} into \eqref{eqn:trdef}, one obtains that
\begin{align}
0\leq a_{r,n} \leq b_{r,n},\quad \forall r, n.\label{eqn:rt_property}
\end{align}

Define
\begin{align}
I_{r,n}(a_{r,n},{b}_{r,n}) = \frac{\big(a_{r,n}-\frac{1}{3}{b}_{r,n}\big)^2}{f_na_{r,n}+\frac{1-f_n}{3}{b}_{r,n}},
\end{align}
then because
\begin{align}
\frac{\partial^2 I_{r,n}}{\partial^2 a_{r,n}} = \frac{6 {b}^2_{r,n}}{\big((1-f_n)b_{r,n} + 3f_na_{r,n}\big)^3}\geq 0,
\label{eqn:convex}
\end{align}
the function $I_{r,n}$ is convex \ac{w.r.t.} $a_{r,n}$. Therefore, according to \eqref{eqn:rt_property},
\begin{align}
&I_{r,n}(a_{r,n},b_{r,n}) \nonumber\\
&\quad\leq \frac{b_{r,n}-a_{r,n}}{b_{r,n}}I_{r,n}(0,b_{r,n}) + \frac{a_{r,n}}{b_{r,n}}I_{r,n}(b_{r,n},b_{r,n})\nonumber\\
&\quad = \frac{1}{3(1-f_n)} (b_{r,n}-a_{r,n}) + \frac{4}{3(2f_n+1)} a_{r,n}.
\label{eqn:convex_relax}
\end{align}

Substituting \eqref{eqn:suma}, \eqref{eqn:sumb}, and \eqref{eqn:convex_relax}
into \eqref{eqn:F_rt}, we obtain that 
\begin{align}
&I_{n,n}(\V{f}_{\Set M})\nonumber\\
&= \sum_{r}I_{r,n}(a_{r,n},b_{r,n})\nonumber\\
&\leq \frac{1}{3(1-f_n)} \sum_{r} (b_{r,n}-b_{r,n}) + \frac{4}{3(2f_n+1)} \sum_{r} b_{r,n}\nonumber\\
&=\frac{2}{3(1-f_n)} + \frac{4}{3(2f_n+1)}\nonumber\\
&= \frac{2}{(2f_n+1)(1-f_n)}.\label{eqn:Inn-upper}
\end{align}
Substituting \eqref{eqn:Inn-upper} into \eqref{eqn:EtoF-2}, the mean squared estimation error is lower bounded as below.
\begin{align}
&\hspace{-10mm}{N\choose M}^{-1}\sum_{\SM}\mathbb{E}_{R}\big[(\check{F}-\bar{f}_{\SM})^2 \big]\nonumber\\
&\geq {N\choose M}^{-1} \sum_{\SM } \sum_{n\in \SM} \frac{I_{n,n}^{-1}(\V{f}_{\SM})}{M^2}\nonumber\\
&\geq {N\choose M}^{-1}\sum_{n\in \Set N}{N-1\choose M-1} \frac{(2f_{n}+1)(1-f_{n})}{2M^2}\nonumber\\
&=\sum_{n\in\Set N}\frac{(2f_n+1)(1-f_n)}{2MN}.
\label{eqn:error-combine-2}
\end{align}
With \eqref{eqn:error-combine-2}, \eqref{eqn:errorbound} is obtained, which completes the proof of \thref{thm:min}.
\endproof

\subsection{Achieving the minimum estimation error}
\label{subsec:optoperator}
This subsection describes the measurement operation used to achieve the minimum estimation error. 

According to the proof of \thref{thm:min}, a necessary condition for achieving the minimum estimation error is that the inequality in \eqref{eqn:convex_relax} is balanced.
To this end, the projection of a measurement operator onto the target state, i.e., $a_{r,n}$ defined in \eqref{eqn:trdef}, must balance either of the two inequalities in \eqref{eqn:rt_property}.
The estimation protocol is built according to this principle.
Specifically, we consider bilateral local Pauli measurements in the same basis.
When both nodes make measurements in the $u$-basis, $u\in \{x,y,z\}$, the operators corresponding to the asymmetric and symmetric measurement results, $\M{M}_{u,0}$ and $\M{M}_{u,1}$, are respectively given by
\begin{subequations}
\begin{align}
\M{M}_{x,0}&=\ket{+-}\!\bra{+-} + \ket{-+}\!\bra{-+}\nonumber\\ 
&= |\Psi^-\rangle\langle\Psi^-| + |\Phi^-\rangle\langle\Phi^-|, \\
\M{M}_{x,1}&=\ket{++}\!\bra{++} + \ket{--}\!\bra{--}\nonumber\\ 
&= |\Psi^+\rangle\langle\Psi^+| + |\Phi^+\rangle\langle\Phi^+|, \\
\M{M}_{y,0}&=\ket{\times \odot}\!\bra{\times\odot} + \ket{\odot \times}\!\bra{\odot \times}\nonumber\\ 
&= |\Psi^-\rangle\langle\Psi^-| + |\Phi^+\rangle\langle\Phi^+|, \\
\M{M}_{y,1}&=\ket{\times \times}\!\bra{\times\times} + \ket{\odot \odot}\!\bra{\odot \odot}\nonumber\\ 
&= |\Psi^+\rangle\langle\Psi^+| + |\Phi^-\rangle\langle\Phi^-|, \\
\M{M}_{z,0}&=\ket{01}\!\bra{01} + \ket{10}\!\bra{10}\nonumber\\ 
&= |\Psi^-\rangle\langle\Psi^-| + |\Psi^+\rangle\langle\Psi^+|, \\
\M{M}_{z,1}&=\ket{00}\!\bra{00} + \ket{11}\!\bra{11}\nonumber\\ 
&= |\Phi^-\rangle\langle\Phi^-| + |\Phi^+\rangle\langle\Phi^+|, 
\end{align}
\label{eqn:protocloperators}
\end{subequations}
where $\ket{+} = \frac{\ket{0} + \ket{1}}{\sqrt{2}}$, $\ket{-} = \frac{\ket{0} - \ket{1}}{\sqrt{2}}$,
$\ket{\times} = \frac{\ket{0} + i\ket{1}}{\sqrt{2}}$, and $\ket{\odot} = \frac{\ket{0} - i\ket{1}}{\sqrt{2}}$.

In \eqref{eqn:protocloperators}, the operators $\M{M}_{u,0}$, $u\in \{x,y,z\}$ balance the second inequality in \eqref{eqn:rt_property},
and the operators $\M{M}_{u,1}$, $u\in \{x,y,z\}$ balance the first inequality in \eqref{eqn:rt_property}.
The following two lemma confirms the optimality of the proposed measurement operator.

\begin{LemA}[Unbiased estimator]\thlabel{lem:unbia}
Given the measurements made in Protocol~1, the following estimated fidelity
\begin{align}
\check{f}=1-\frac{3\varepsilon_{\SM}}{2}\label{eqn:estfid}
\end{align}
 satisfies \eqref{eqn:unbiased_id2}, where the \ac{QBER} $\varepsilon_{\SM}$ is expressed as
\begin{align}
\varepsilon_{\SM}= \frac{\sum_{n\in\SM} r_{n}}{M},\label{eqn:varM}
\end{align}
in which $r_{n}$ is the measurement result of qubit $n$.
\end{LemA}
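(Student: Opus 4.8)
\emph{Proof plan.} The plan is to verify the strengthened unbiasedness constraint \eqref{eqn:unbiased_id2} by a direct, pair-by-pair computation of $\mathbb{E}_R[\check F\mid\V{\rho}_{\SM}]$. The first step is to note that, under Protocol~\ref{alg:fidelityest}, the bit $r_n$ recorded for a sampled pair $n$ depends only on the reduced state $\V{\rho}_n$ of that pair and on the independently drawn basis $A_n$. Conditioned on $A_n=u$ with $u\in\{x,y,z\}$, the two nodes carry out the bilateral $u$-basis measurement whose ``results match'' (symmetric) outcome is the POVM element $\M{M}_{u,1}$ listed in \eqref{eqn:protocloperators}, and whose ``results differ'' outcome is $\M{M}_{u,0}$. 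Hence $\Pr[r_n=1\mid A_n=u,\V{\rho}_n]=\mathrm{Tr}(\M{M}_{u,1}\V{\rho}_n)$.

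The second step is to average over the uniform choice of $A_n$. Summing the three operators $\M{M}_{u,1}$ in \eqref{eqn:protocloperators} gives
\begin{align}
\tfrac13\textstyle\sum_{u\in\{x,y,z\}}\M{M}_{u,1}
&=\tfrac23\big(|\Psi^+\rangle\langle\Psi^+|+|\Phi^+\rangle\langle\Phi^+|+|\Phi^-\rangle\langle\Phi^-|\big)
=\tfrac23\big(\mathbb{I}_4-|\Psi^-\rangle\langle\Psi^-|\big),\nonumber
\end{align}
so that $\mathbb{E}[r_n\mid\V{\rho}_n]=\Pr[r_n=1\mid\V{\rho}_n]=\tfrac23\,\mathrm{Tr}\big((\mathbb{I}_4-|\Psi^-\rangle\langle\Psi^-|)\V{\rho}_n\big)=\tfrac23(1-f_n)$. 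Then, by linearity of expectation applied to \eqref{eqn:varM}, $\mathbb{E}[\varepsilon_{\SM}\mid\V{\rho}_{\SM}]=\tfrac1M\sum_{n\in\SM}\tfrac23(1-f_n)=\tfrac23(1-\bar f_{\SM})$; substituting into \eqref{eqn:estfid} gives $\mathbb{E}_R[\check F\mid\V{\rho}_{\SM}]=1-\tfrac32\cdot\tfrac23(1-\bar f_{\SM})=\bar f_{\SM}$, which is exactly \eqref{eqn:unbiased_id2}. Because only the single-pair marginals $\V{\rho}_n$ enter, the argument is insensitive to whether $\V{\rho}_{\SM}$ factorizes, so no independence of the $r_n$ is needed.

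The step requiring the most care is the first one: one must confirm that the ``results match'' event for bilateral $u$-basis measurements is precisely the projector $\M{M}_{u,1}$ of \eqref{eqn:protocloperators}. This amounts to listing, for each $u$, which products of single-qubit $u$-eigenstates carry equal labels at the two nodes and re-expanding those rank-two projectors in the Bell basis. The one point that deserves a moment's thought is that the singlet $|\Psi^-\rangle$ is perfectly anti-correlated in \emph{every} local basis, which is why $|\Psi^-\rangle\langle\Psi^-|$ belongs to $\M{M}_{u,0}$ for all three $u$ and therefore drops out of the sum above; everything else is routine bookkeeping, and I would relegate the basis-change identities to the line preceding \eqref{eqn:protocloperators} that already records them.
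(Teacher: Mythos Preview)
Your proposal is correct and follows essentially the same approach as the paper: both compute $\mathbb{E}[R_n]$ by expressing the bilateral Pauli-measurement projectors in the Bell basis (the identities recorded in \eqref{eqn:protocloperators}), averaging over the uniform basis choice $A_n$, and then invoking linearity of expectation on \eqref{eqn:varM} and \eqref{eqn:estfid}. The only organizational difference is that the paper introduces the four Bell-diagonal averages $f^{(0)},\dots,f^{(3)}$ and sums the per-basis contributions \eqref{eqn:F02}--\eqref{eqn:F01} at the end, whereas you sum the three operators $\M{M}_{u,1}$ first to obtain $\tfrac{2}{3}(\mathbb{I}_4-|\Psi^-\rangle\langle\Psi^-|)$ directly; your packaging is a bit more compact but the substance is identical.
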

\begin{proof}
We define the following four terms:
\begin{align}
\begin{split}
f^{(0)} = \frac{1}{M}\sum_{n\in \SM}\mathrm{Tr}(\V{\rho}_n|\Psi^- \rangle\langle\Psi^- |),\\
f^{(1)} = \frac{1}{M}\sum_{n\in \SM}\mathrm{Tr}(\V{\rho}_n|\Psi^+ \rangle\langle\Psi^+ |),\\
f^{(2)} = \frac{1}{M}\sum_{n\in \SM}\mathrm{Tr}(\V{\rho}_n|\Phi^- \rangle\langle\Phi^- |),\\
f^{(3)} = \frac{1}{M}\sum_{n\in \SM}\mathrm{Tr}(\V{\rho}_n|\Phi^+ \rangle\langle\Phi^+ |).
\end{split}
\label{eqn:F0123}
\end{align}
In this case, $\bar{f}_{\SM} = f^{(0)}$. Moreover, as the Bell states form a basis of $\mathbb{C}^4$,
\begin{align}
\sum_{i=0}^{3}f^{(i)} =  \frac{1}{M}\sum_{n\in \SM}\mathrm{Tr}(\V{\rho}_n)=1.\label{eqn:Fsum}
\end{align}

Given the distribution of $A_n$, i.e.,
$\Pr[A_n=u]=\frac{1}{3}$, $u\in\{x,y,z\}$,
 and the expression of the measurement operators \eqref{eqn:protocloperators}, the following expression can be obtained:
\begin{align}
&\mathbb{E}\bigg[\frac{\sum_{n\in \Set{M}} R_{n}  \mathds{1}( A_n = x)}{M}\bigg|\bar{f}_{\SM}\bigg]\nonumber\\
&= \frac{1}{M}\sum_{n\in \Set{M}} \Pr[A_n=x]\mathbb{E}\big[ R_{n} \big| A_n=x,\bar{f}_{\SM}\,\big]\nonumber\\
&=\frac{1}{3M}\sum_{n\in \SM}\mathrm{Tr}\big(
 \V{\rho}_n (\mathbb{I}_4 - \M{M}_{x,0})\big)
\nonumber\\
&= \frac{1}{3M}\bigg(\sum_{n\in \SM}\mathrm{Tr}\big(
 \V{\rho}_n  |\Psi^+\rangle\langle\Psi^+|\big)
 +\sum_{n\in \SM}\mathrm{Tr}\big(
 \V{\rho}_n  |\Phi^+\rangle\langle\Phi^+|\big)\bigg)\nonumber\\
&= \frac{f^{(1)}+f^{(3)}}{3}.\label{eqn:F02}
\end{align}
Similar to \eqref{eqn:F02}, one can obtain that
\begin{align}
\mathbb{E}\bigg[\frac{\sum_{n\in \Set{M}} R_{n}  \mathds{1}( A_n = y)}{M}\bigg|\bar{f}_{\SM}\bigg]\hspace{-8.8mm}\nonumber\\
&= \frac{1}{3M}\sum_{n\in \SM}\mathrm{Tr}\big(
 \V{\rho}_n  (\mathbb{I}_4 - \M{M}_{y,0})\big)
\nonumber\\
&= \frac{f^{(1)}+f^{(2)}}{3},\label{eqn:F03}\\
\mathbb{E}\bigg[\frac{\sum_{n\in \Set{M}} R_{n}  \mathds{1}( A_n = z)}{M}\bigg|\bar{f}_{\SM}\bigg]\hspace{-8.8mm}\nonumber\\
&= \frac{1}{3M}\sum_{n\in \SM}\mathrm{Tr}\big(
 \V{\rho}_n  (\mathbb{I}_4 - \M{M}_{z,0})\big)
\nonumber\\
&= \frac{f^{(2)}+f^{(3)}}{3}.\label{eqn:F01}
\end{align}
Substituting \eqref{eqn:Fsum}--\eqref{eqn:F01} into \eqref{eqn:estfid}, yields

\begin{align}
&\mathbb{E}\big[\check{F}\big|\bar{f}_{\SM}\,\big]\nonumber\\
&= 1-\frac{3}{2}\mathbb{E}\big[\Set{E}_{\SM}\big|\bar{f}_{\SM}\big]\nonumber\\
&= 1-\frac{3}{2}
\mathbb{E}\bigg[\frac{\sum_{a\in\{x,y,z\}}\sum_{n\in \Set{M}} R_{n}  \mathds{1}\{ A_n = a\}}{M}\bigg|\bar{f}_{\SM}\bigg]\nonumber\\
&=1- \frac{3}{2}\bigg(\sum_{i=1}^3 \frac{2f^{(i)}}{3}\bigg)\nonumber\\
&=f^{(0)} = \bar{f}_{\SM},
\label{eqn:barF}
\end{align}
where $\check{F}$ and ${\Set E}_{\SM}$ are the random variable form of $\check{f}$ and $\varepsilon_{\SM}$,  respectively.
\eqref{eqn:barF} shows that the estimator given in \eqref{eqn:estfid} satisfies \eqref{eqn:unbiased_id2}. 
This completes the proof of \thref{lem:unbia}.
\end{proof}

Next, \thref{lem:opt-id} proves the optimality of the proposed measurement operation in scenarios with independent noise.

\begin{LemA}[Optimality with independent noise]\thlabel{lem:opt-id}
Protocol~1 achieves the minimum estimation error in $\Ps$-\ref{prob:id2}.
\end{LemA}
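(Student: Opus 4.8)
The plan is to verify that Protocol~1, together with the estimator $\check f = 1-\frac{3}{2}\varepsilon_{\SM}$, is feasible for $\Ps$-\ref{prob:id2} and attains the lower bound \eqref{eqn:errorbound} proved in \thref{thm:min}; since \thref{thm:min} shows no feasible protocol can do better, optimality follows. Feasibility is exactly \thref{lem:unbia}, which gives the unbiased constraint \eqref{eqn:unbiased_id2}. Because the estimate is unbiased conditioned on the sampled state, the conditional mean squared error coincides with the conditional variance $\mathrm{Var}(\check F\mid\SM)$, so the task reduces to computing this variance and averaging over $\SM$.

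First I would reduce $\mathrm{Var}(\check F\mid\SM)$ to a sum of single-pair variances. Since $\check F = 1-\frac{3}{2M}\sum_{n\in\SM}R_n$ is affine in $\sum_{n\in\SM}R_n$, we have $\mathrm{Var}(\check F\mid\SM)=\frac{9}{4M^2}\,\mathrm{Var}\big(\sum_{n\in\SM}R_n\mid\SM\big)$. The decisive point is that for a product state $\V\rho_{\mathrm{all}}=\otimes^{n\in\Set N}\V\rho_n$ the variables $\{R_n:n\in\SM\}$ are mutually independent: each $R_n$ is a function only of the i.i.d.\ basis draw $A_n$ and the bilateral measurement outcome on pair $n$, which by the product structure are independent across $n$. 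Hence the cross terms vanish and $\mathrm{Var}\big(\sum_{n\in\SM}R_n\mid\SM\big)=\sum_{n\in\SM}\mathrm{Var}(R_n)$.

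Next I would compute $\mathrm{Var}(R_n)$. Each $R_n$ is Bernoulli; averaging over $A_n\in\{x,y,z\}$ uniformly and using the operators \eqref{eqn:protocloperators}, one gets $\Pr[R_n=1]=\frac13\sum_{u\in\{x,y,z\}}\mathrm{Tr}\big(\V\rho_n(\mathbb I_4-\M M_{u,0})\big)$; expressing this through the four Bell-basis overlaps of $\V\rho_n$ as in \eqref{eqn:F0123}, using that they sum to $1$ with the $|\Psi^-\rangle$-overlap equal to $f_n$, all the cross-overlaps cancel and one is left with $\Pr[R_n=1]=\frac{2(1-f_n)}{3}$, which depends on $\V\rho_n$ only through $f_n$. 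Therefore $\mathrm{Var}(R_n)=\frac{2(1-f_n)}{3}\cdot\frac{2f_n+1}{3}=\frac{2(2f_n+1)(1-f_n)}{9}$, matching \eqref{eqn:VRn-m}, and combining with the previous paragraph $\mathrm{Var}(\check F\mid\SM)=\frac{1}{2M^2}\sum_{n\in\SM}(2f_n+1)(1-f_n)$.

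Finally I would perform the average over sample sets. Summing over all $M$-subsets $\SM$ and using that each pair $n$ belongs to exactly ${N-1\choose M-1}$ of them, with ${N-1\choose M-1}\big/{N\choose M}=M/N$, yields $\sum_{\SM}\mathrm{Var}(\check F\mid\SM)={N\choose M}\frac{1}{2MN}\sum_{n\in\Set N}(2f_n+1)(1-f_n)$, so the mean squared estimation error of Protocol~1 equals $\sum_{n\in\Set N}\frac{(2f_n+1)(1-f_n)}{2MN}$ --- precisely the bound of \thref{thm:min}. Since this value is the same for every state in $\Set S_{\mathrm{id}}(\V f_{\mathrm{all}})$, the inner maximum in $\Ps$-\ref{prob:id2} equals the same number, so Protocol~1 is an optimal solution for every fidelity composition $\V f_{\mathrm{all}}$. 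I expect the only genuine subtlety to be the careful justification of the cross-pair independence of the $R_n$ for product states (i.e.\ that a bilateral same-basis Pauli measurement on pair $n$ with an independently drawn basis yields an outcome statistically independent of the other pairs), with the subset-counting bookkeeping being routine but easy to slip on.
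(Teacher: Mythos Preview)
Your proposal is correct and follows essentially the same route as the paper's own proof: invoke \thref{lem:unbia} for feasibility, use independence of the $R_n$ on product states to reduce $\mathrm{Var}(\check F\mid\SM)$ to a sum of single-pair variances, compute $\Pr[R_n=1]=\tfrac{2}{3}(1-f_n)$ from the Bell-basis decomposition, and then average over $\SM$ via the ${N-1\choose M-1}$ count to hit the bound \eqref{eqn:errorbound} of \thref{thm:min}. The independence subtlety you flag is exactly the one the paper dispatches in a single sentence, and the bookkeeping matches.
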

\begin{proof}
By repeating the derivation from \eqref{eqn:F02} to \eqref{eqn:barF} for qubit pair $n$, the following expression can be obtained
\begin{align}
\mathrm{Pr}(R_n =1) = \frac{2}{3}(1-f_n).
\end{align}
Hence, the variance of $R_n$ is given by
\begin{align}
\mathbb{V}\big[R_n\big]&= \mathrm{Pr}(R_n =1) (1-\mathrm{Pr}(R_n =1) )\nonumber\\
 &= \frac{(2f_n+1)(2-2f_n)}{9}\label{eqn:VRn}
\end{align}

In the case of independent noise, the measurement outcomes $R_n$, $n\in \SM$ on different qubit pairs are independent.
Therefore, according to \eqref{eqn:estfid} and \eqref{eqn:varM}
\begin{align}
\mathbb{V}\big[\check{F}\big| \SM\big]
&= \frac{9}{4M^2}\sum_{n\in\SM} \mathbb{V}\big[R_n\big]\nonumber\\
&=\sum_{n\in\SM}\frac{(2f_n+1)(1-f_n)}{2M^2}.\label{eqn:Fvar-M}
\end{align}

According to Lemma~\ref{lem:unbia}, $\check{F}$ satisfies \eqref{eqn:unbiased_id2}, hence
\begin{align}
\mathbb{E}_{R}\big[(\check{F}-\bar{F}_{\SM})^2\big| \SM \big]= \mathbb{V}\big[\check{F}\big| \SM\big].
\label{eqn:E2V-F}
\end{align}
As the sample set $\SM$ is selected completely at random, \eqref{eqn:Fvar-M} and \eqref{eqn:E2V-F} indicate that
for all ${\Set S}_{\mathrm{id}}(\V{f}_{\mathrm{all}})\subset{\Set S}_{\mathrm{id}}$,
\begin{align}
&{N\choose M}^{-1} \sum_{\SM \subset \Set N}\mathbb{E}_{R}\big[(\check{F}-\bar{F}_{\SM})^2 \big]\nonumber\\ 
& = {N\choose M}^{-1} \sum_{\SM \subset \Set N}\sum_{n\in\SM}\frac{(2f_n+1)(1-f_n)}{2M^2}\nonumber\\
& = {N\choose M}^{-1} \sum_{n\in\Set N}{N-1\choose M-1}\frac{(2f_n+1)(1-f_n)}{2M^2}\nonumber\\
&=\sum_{n\in\Set N}\frac{(2f_n+1)(1-f_n)}{2MN}.\label{eqn:error-upperbound}
\end{align}

According to \thref{thm:min}, the estimation error of $\Ps$-\ref{prob:id2} is no less than \eqref{eqn:error-upperbound}.
This aspect shows that Protocol~1 is optimal in $\Ps$-\ref{prob:id2} for all fidelity compositions ${\Set S}_{\mathrm{id}}(\V{f}_{\mathrm{all}})\subset{\Set S}_{\mathrm{id}}$.
This completes the proof of \thref{lem:opt-id}.
\end{proof}

Finally, \thref{thm:opt} summarizes the results of this section.

\begin{ThmA}[Optimal estimation protocol]\thlabel{thm:opt}
Protocol~1 is optimal for $\Ps$-\ref{prob:1}.
\end{ThmA}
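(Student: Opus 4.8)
First I would simply chain the two pillars already in place. By \thref{lem:opt-id}, the measurement operation $\Set O^*$ of Protocol~\ref{alg:fidelityest}, paired with the estimator $\check f = 1-\tfrac{3}{2}\varepsilon_{\SM}$ of \eqref{eqn:estfid}, attains the minimum estimation error in $\Ps$-\ref{prob:id2} for \emph{every} fidelity composition $\Set S_{\mathrm{id}}(\V f_{\mathrm{all}})\subset\Set S_{\mathrm{id}}$; its unbiasedness (i.e.\ satisfaction of \eqref{eqn:unbiased_id2}) is supplied by \thref{lem:unbia}. Feeding this into \thref{thm:gen-opt-id} immediately yields that the composite operation $\hat{\Set O}^* = \Set O^* \circ \Set T$, together with the same estimator $\Set D^*$, is optimal for $\Ps$-\ref{prob:1}, with $\Set T$ the probabilistic bilateral rotation of \thref{def:T}.

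The only genuine step left is to discard the prefactor $\Set T$, that is, to show $\hat{\Set O}^* = \Set O^*$ \emph{as channels}, so that Protocol~\ref{alg:fidelityest} itself---which contains no rotation---inherits the optimality just established. For this I would invoke the single-pair Kraus form of $\Set T$, namely $\M T_\phi = |\phi\rangle\langle\phi|$ for $\phi\in\{\Phi^\pm,\Psi^\pm\}$ from \eqref{eqn:KrausT-m}. Since every effect operator $\M M_{u,r}$ of $\Set O^*$ in \eqref{eqn:protocloperators} is a sum of two Bell-state projectors, hence diagonal in the Bell basis, one has $\sum_{\phi}\M T_\phi^{\dagger}\M M_{u,r}\M T_\phi = \M M_{u,r}$ for all $u\in\{x,y,z\}$ and $r\in\{0,1\}$; tensoring this identity over the sampled pairs $n\in\SM$ gives $\hat{\Set O}^* = \Set O^*\circ\Set T = \Set O^*$, as recorded in \eqref{eqn:TMur-m}. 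Because the measurement statistics are literally unchanged, the estimator $\Set D^*$ is unchanged as well, so $(\hat{\Set O}^*,\Set D^*)$ and Protocol~\ref{alg:fidelityest} are the same estimation protocol. Combining with the previous paragraph, Protocol~\ref{alg:fidelityest} is optimal for $\Ps$-\ref{prob:1}.

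I expect the only subtlety to lie in this last identification. It is \emph{not} enough to know that $\Set T$ preserves fidelity---that fact was already consumed inside \thref{thm:gen-opt-id}---rather, one must verify at the level of individual POVM effects that pre-composing with $\Set T$ leaves each Pauli-measurement operator untouched, which is exactly what the Bell-diagonal structure of \eqref{eqn:protocloperators} provides; equivalently, $\Set T$ only kills Bell-basis off-diagonal terms, and there are none in $\M M_{u,r}$. Everything else is a routine assembly of \thref{lem:opt-id}, \thref{lem:unbia}, and \thref{thm:gen-opt-id}.
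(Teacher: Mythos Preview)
Your proposal is correct and follows essentially the same route as the paper: invoke \thref{lem:opt-id} and \thref{thm:gen-opt-id} to conclude that $\hat{\Set O}^*=\Set O^*\circ\Set T$ is optimal for $\Ps$-\ref{prob:1}, then use the Kraus representation $\M T_\phi=|\phi\rangle\langle\phi|$ together with the Bell-diagonal form of each $\M M_{u,r}$ in \eqref{eqn:protocloperators} to verify $\sum_\phi \M T_\phi^\dagger \M M_{u,r}\M T_\phi=\M M_{u,r}$, hence $\hat{\Set O}^*=\Set O^*$. The paper's proof is organized identically; your explicit mention of \thref{lem:unbia} is redundant since it is already absorbed into \thref{lem:opt-id}, but harmless.
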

\begin{proof}
Denote the measurement operation and estimator of Protocol~1 by $\Set O^*$ and $\Set D^*$, respectively.
According to \thref{lem:opt-id} and \thref{thm:gen-opt-id}, the estimation protocol $\{\hat{\Set O}^*, \Set D^* \}$ is optimal for $\Ps$-\ref{prob:1}, where the composite measurement operation $\hat{\Set O}^* = \Set O^* \circ \Set T$.

Recall the following property of the operation $\Set T$ given in \eqref{eqn:R-phi}
\begin{align}
\Set{T}(|\phi\rangle\langle \psi | ) &= 
\left\{\begin{array}{ll}
|\phi\rangle\langle \psi | & \mbox{ if } |\phi\rangle = |\psi\rangle \\
0 & \mbox{ if } |\phi\rangle \neq |\psi\rangle
\end{array}\right.,\label{eqn:R-phi-2}
\end{align}
where $|\phi\rangle, |\psi\rangle$ are Bell states  $\{|\Phi^\pm\rangle, |\Psi^\pm\rangle\}$.
Given \eqref{eqn:R-phi-2} and the fact that Bell states form a basis of $\Set{H}^{4}$, the Kraus operators of $\Set{T}$ processing one qubit pair can be expressed as
\begin{align}
\M{T}_{\phi} &=  |\phi\rangle\langle \phi |, \quad \mbox{where}\quad \phi \in \{\Phi^\pm, \Psi^\pm\}.\label{eqn:KrausT}
\end{align}  
In this case, \eqref{eqn:protocloperators} and \eqref{eqn:KrausT},
operation $\Set{T}$ does not change the operators of $\Set O^*$, i.e., 
\begin{align}
\M{M}_{u,r} = \sum_{\phi \in \{\Phi^\pm, \Psi^\pm\}} \M{T}^\dag_{\phi}\M{M}_{u,r}\M{T}_{\phi},\;\; \forall\, u\in\{x,y,z\},\, r\in\{0,1\},
\label{eqn:TMur}
\end{align}
where $^\dag$ is the Hermitian transpose. \eqref{eqn:TMur} shows that the composite measurement operation $\hat{\Set O}^* = \Set O^* \circ \Set T$ is equivalent to $\Set O^*$, i.e., 
\begin{align}
\hat{\Set O}^* = \Set O^* \circ \Set T =\Set O^*.\label{eqn:OT=O_ap}
\end{align}
According to \eqref{eqn:OT=O_ap}, the estimation protocol $\{\Set O^*, \Set D^*\}$ of Protocol~1 is optimal for $\Ps$-\ref{prob:1}. This proves \thref{thm:opt}.
\end{proof}

\end{document}